\newtheorem{theorem}{Theorem}[section]
\newtheorem{lemma}[theorem]{Lemma}
\newtheorem{remark}[theorem]{Remark}
\title{Boundary-free Estimators of the Mean Residual Life Function by Transformation}
\author{ \href{https://orcid.org/0000-0002-6939-9465}{\includegraphics[scale=0.06]{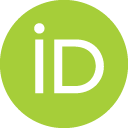}\hspace{1mm}Rizky Reza Fauzi}\thanks{corresponding author}\\
	Graduate School of Mathematics\\
	Kyushu University\\
	744 Motooka, Nishi-ku, Fukuoka-shi, Fukuoka-ken, JAPAN\\
	\texttt{fauzi.rizky.853@s.kyushu-u.ac.jp}\\
	%% examples of more authors
	\And
	Yoshihiko Maesono\\
	Faculty of Sciemce and Engineering\\
	Chuo University\\
	1-13-27 Kasuga, Bunkyo-ku, Tokyo-to, JAPAN\\
	\texttt{maesono@math.chuo-u.ac.jp}\\
	%% \AND
	%% Coauthor \\
	%% Affiliation \\
	%% Address \\
	%% \texttt{email} \\
	%% \And
	%% Coauthor \\
	%% Affiliation \\
	%% Address \\
	%% \texttt{email} \\
	%% \And
	%% Coauthor \\
	%% Affiliation \\
	%% Address \\
	%% \texttt{email} \\
}
\begin{document}
\maketitle

\begin{abstract}
We propose two new kernel-type estimators of the mean residual life function $m_X(t)$ of bounded or half-bounded interval supported distributions. Though not as severe as the boundary problems in the kernel density estimation, eliminating the boundary bias problems that occur in the naive kernel estimator of the mean residual life function is needed. In this article, we utilize the property of bijective transformation. Furthermore, our proposed methods preserve the mean value property, which cannot be done by the naive kernel estimator. Some simulation results showing the estimators' performances and a real data analysis will be presented in the last part of this article.
\end{abstract}

% keywords can be removed
\keywords{Cumulative survival function \and Kernel method \and Mean residual life function \and Survival function \and Transformation}

\section{Introduction}\label{intro}
Statistical inference for remaining lifetimes would be intuitively more appealing than the popular hazard rate function, since its interpretation as ``the risk of immediate failure'' can be difficult to grasp. A function called the mean residual life (or mean excess loss) which represents ``the average remaining time before failure'' is easier to understand. The mean residual life (or MRL for short) function is of interest in many fields relating to time and finance, such as biomedical theory, survival analysis, and actuarial science.

Let $X_1,X_2,...,X_n$ be independently and identically distributed absolutely continuous random variables supported on an interval $\Omega\subset\mathbb{R}$, where $\inf\Omega=\omega'$, $\sup\Omega=\omega''$, and $-\infty\leq\omega'<\omega''\leq\infty$. Also, let $f_X(t)$ be the density function, $F_X(t)$ be the cumulative distribution function, $S_X(t)=\Pr(X>t)$ be the survival function, and $\mathbb{S}_X(t)=\int_t^\infty S_X(x)\mathrm{d}x$ be the cumulative survival function, of $X$. Then
\begin{gather}
m_X(t)=E(X-t|X>t), \; \; \; t\in\Omega
\end{gather}
is the definition of the mean residual life function, or can be written as
\begin{gather}
m_X(t)=\frac{\mathbb{S}_X(t)}{S_X(t)}.
\end{gather}
For a detailed discussion about the MRL function, see Embrechts \textit{et al.}~\cite{EKM97} or Guess and Proschan~\cite{GP88}. Murari and Sujit~\cite{MS95} and Belzunce \textit{et al.}~\cite{BRPS96} discussed the use of the MRL function for ordering and classifying distributions. On the other hand, Cox~\cite{cox62}, Kotz and Shanbhag~\cite{KS80}, and Zoroa \textit{et al.}~\cite{ZRM90} proposed how to determine distribution via an inversion formula of $m_X(t)$. Ruiz and Navarro~\cite{RN94} have considered the problem of characterization of the distribution function through the relationship between the MRL function and the hazard rate function. The MRL functions of finite mixtures and order statistics have been studied as well by Navarro and Hernandez~\cite{NH08}.

Some properties and applications of the MRL concept related to operational research and reliability theory in engineering are interesting topics. While Nanda \textit{et al.}~\cite{NBB10} discussed the properties of associated orderings in the MRL function, Huynh \textit{et al.}~\cite{HCBB14} studied the usefulness of the MRL models for maintenance decision-making.

Another examples are the utilization of the MRL functions of parallel system by Sadegh~\cite{sad08}, the MRL for records by Raqab and Asadi~\cite{RA08}, the MRL of a $k$-out-of-$n$:G system by Eryilmaz~\cite{ery12}, the MRL of a $(n-k+1)$-out-of-$n$ system by Poursaeed~\cite{pou10}, the MRL in reliability shock models by Eryilmaz~\cite{ery17}, the MRL subjected to Marshall-Olkin type shocks by Bayramoglu and Ozkut~\cite{BO16}, the MRL of coherent systems by Eryilmaz \textit{et al.}~\cite{ECC18} and Kavlak~\cite{kav17}, the MRL for degrading systems by Zhao \textit{et al.}~\cite{ZMCL18}, and the MRL of rail wagon bearings by Ghasemi and Hodkiewicz~\cite{GH12}.

The natural estimator of the MRL function is the empirical one, defined as
\begin{gather}
m_n(t)=\frac{\mathbb{S}_n(t)}{S_n(t)}=\frac{\sum_{i=1}^n(X_i-t)I(X_i>t)}{\sum_{i=1}^n I(X_i>t)}, \; \; \; t\in\Omega,
\end{gather}
where $I(A)$ is the usual indicator function on set $A$. Yang~\cite{yan78}, Ebrahimi~\cite{ebr91}, and Cs\"{o}rg\H{o} and Zitikis~\cite{CZ96} studied the properties of $m_n(t)$. Even though it has several good attributes (e.g. unbiasedness and consistency), the empirical MRL function is just a rough estimate of $m_X(t)$ and lack of smoothness. Estimating is also impossible for large $t$ because $S_n(t)=0$ for $t>\max\{X_1,X_2,...,X_n\}$. Though we can just define $m_n(t)=0$ for such case, it is a major disadvantage as analysing the behaviour of the MRL function when $t\rightarrow\infty$ is of an interest.

Various parametric models of MRL have been discussed in literatures, for example the transformed parametric MRL models by Sun and Zhang~\cite{SZ09}, the upside-down bathtub-shaped MRL model by Shen \textit{et al.}~\cite{STX09}, the MRL order of convolutions of heterogeneous exponential random variables by Zhao and Balakrishnan~\cite{ZB09}, the proportional MRL model by Nanda \textit{et al.}~\cite{NBA06} and Chan \textit{et al.}~\cite{CCD12}, and the MRL models with time-dependent coefficients by Sun \textit{et al.}~\cite{SSZ12}.

Some nonparametric estimators of $m_X(t)$ which are related to the empirical one have been discussed in a fair amount of literature. For example, Ruiz and Guillam\'{o}n~\cite{RG96} estimated the numerator in $m_n(t)$ by a recursive kernel estimate and left the empirical survival function unchanged, while Chaubey and Sen~\cite{CS99} used the Hille's Theorem in Hille~\cite{hil48} to smooth both the numerator and denominator in $m_n(t)$.

The other maneuver that can be used for estimating the MRL function nonparametrically is the kernel method. Let $K(x)$ be a symmetric continuous nonnegative kernel function with $\int_{-\infty}^\infty K(x)\mathrm{d}x=1$, and $h>0$ be a bandwidth satisfying $h\rightarrow 0$ and $nh\rightarrow\infty$ when $n\rightarrow\infty$. From this, we will have three other functions derived from $K(x)$, they are
\begin{eqnarray}
W(x)=\int_{-\infty}^x K(z)\mathrm{d}z, \; \; \; V(x)=\int_x^\infty K(z)\mathrm{d}z, \; \; \; \mathrm{and} \; \; \; \mathbb{V}(x)=\int_x^\infty V(z)\mathrm{d}z.
\end{eqnarray}
Hence, the naive kernel MRL function estimator can be defined as
\begin{gather}
\widehat{m}_X(t)=\frac{\widehat{\mathbb{S}}_X(t)}{\widehat{S}_X(t)}=\frac{h\sum_{i=1}^n\mathbb{V}\left(\frac{t-X_i}{h}\right)}{\sum_{i=1}^n V\left(\frac{t-X_i}{h}\right)}, \; \; \; t\in\Omega.
\end{gather}
Guillam\'{o}n \textit{et al.}~\cite{GNR96} discussed the asymptotic properties of the naive kernel MRL function estimator in detail.

However, as usually $m_X(t)$ is used for time or finance related data, which are on nonnegative real line or bounded interval, the naive kernel MRL function estimator suffers the so called boundary bias problem. In the case of $f_X(\omega')=0$ (or $f_X(\omega'')=0$), the boundary effects of $\widehat{m}_X(t)$ when $t\rightarrow\omega'$ (or $t\rightarrow\omega''$) is not as bad as in the kernel density estimator, but the problems still occur. It is because the term $S_X(\omega')$ and $1-S_X(\omega'')$ in the $Bias[\widehat{\mathbb{S}}_X(\omega')]$ and $Bias[\widehat{\mathbb{S}}_X(\omega'')]$ can never be $0$ since $S_X(\omega')=1-S_X(\omega'')=1$, which means $\widehat{\mathbb{S}}_X(t)$ causes the boundary problems for $\widehat{m}_X(t)$. Moreover, in the case of $f_X(\omega')>0$ and $f_X(\omega'')>0$ (e.g. uniform distribution), not only $\widehat{\mathbb{S}}_X(t)$, but $\widehat{S}_X(t)$ also adds its share to the boundary problems for $\widehat{m}_X(t)$.

To make things worse, the naive kernel MRL function estimator does not preserve one of the most important properties of the MRL function, which is $m_X(\omega')+\omega'=E(X)$. It is reasonable if we expect $\widehat{m}_X(\omega')+\omega'\approx\bar{X}$. However, $\widehat{S}_X(\omega')$ is less than $1$ and $\widehat{\mathbb{S}}_X(\omega')$ is smaller than the average value of $X_i's$, due to the weight that they still put on the outside of $\Omega$. Accordingly, there is no guarantee of how far or how close $\widehat{m}_X(\omega')+\omega'$ is to $\bar{X}$. Some simulations in Section~\ref{sec:sim} illustrate this statement.

Some articles have suggested methods to solve the boundary bias problems in the density estimation, such as data reflection by Schuster~\cite{sch85}; simple nonnegative boundary correction by Jones and Foster~\cite{JF96}; boundary kernels by M\"{u}ller~\cite{mul91}, M\"{u}ller~\cite{mul93}, and M\"{u}ller and Wang~\cite{MW94}; generating pseudo data by Cowling and Hall~\cite{CH96}; hybrid method by Hall and Wehrly~\cite{HW91}; and the local polynomial fitting by Fan and Gijbels~\cite{FG96}. Even though only few literatures discussed how to extend previous ideas for solving the problems in the MRL function estimation, Abdous and Berred~\cite{AB05} successfully adopted the idea of local polynomial fitting (linear in their case) for the MRL function estimation.

In this article we are going to try another idea to remove the boundary effects, which is utilizing transformations that map $\Omega$ to $\mathbb{R}$ bijectively. In this situation there are no boundary effects at all, as we will not put any weight outside the support. Hence, instead of using $X_1,X_2,...,X_n$, we will apply the kernel method for the transformed $Y_1,Y_2,...,Y_n$, where $Y_i=g^{-1}(X_i)$ and $g:\mathbb{R}\rightarrow\Omega$ is a bijective function. However, even though the idea is easy to understand, we cannot just substitute $t$ with $g^{-1}(t)$ and $X_i$ with $Y_i$ in the formula of $\widehat{m}_X(t)$, due to avoiding nonintegrability. We need to modify the naive kernel MRL function estimator before substituting $g^{-1}(t)$ and $Y_i$ in order to preserve the integrability and to ensure that the new formulas are good estimators of the mean excess loss function.

Before moving on to our main focus, we need to impose some conditions:
\begin{enumerate}
	\item[\textbf{C1}.] The kernel $K(x)$ is a continuous nonnegative function and symmetric at $x=0$ with $\int_{-\infty}^\infty K(x)\mathrm{d}x=1$
	\item[\textbf{C2}.] The bandwidth $h>0$ satisfies $h\rightarrow 0$ and $nh\rightarrow\infty$ when $n\rightarrow\infty$
	\item[\textbf{C3}.] The function $g:\mathbb{R}\rightarrow\Omega$ is continuous and strictly increasing
	\item[\textbf{C4}.] The density $f_X$ and the function $g$ are continuously differentiable at least twice
	\item[\textbf{C5}.] The integrals $\int_{-\infty}^\infty g'(ux)K(x)\mathrm{d}x$ and $\int_{-\infty}^\infty g'(ux)V(x)\mathrm{d}x$ are finite for all $u$ in an $\varepsilon$-neighbourhood of the origin
	\item[\textbf{C6}.] The expectations $E(X)$, $E(X^2)$, and $E(X^3)$ exist.
\end{enumerate}
The first and the second conditions are standard assumptions for kernel methods, and C3 is needed for the bijectivity and the simplicity of the transformation. Since we will use some expansions of the survival and the cumulative survival functions, C4 is important to ensure the validity of our proofs. The last two conditions are necessary to make sure we can derive the bias and the variance formulas. In order to calculate the variances, we also define a new function
\begin{gather}
\bar{\mathbb{S}}_X(t)=\int_t^\infty\mathbb{S}_X(x)\mathrm{d}x
\end{gather}
for simpler notation. Some numerical studies are discussed in Section~\ref{sec:num}, and the detailed proofs can be found in the appendices.

\section{Estimators of the survival function and the cumulative survival function}
\label{sec:sfcsf}
Before jumping into the estimation of the mean residual life function, we will first discuss on the estimations of each component, which are the survival function $S_X(t)$ and the cumulative survival function $\mathbb{S}_X(t)$. In this article, we proposed two sets of estimators using the idea of transformation. Based on those two sets of estimators, we will propose two estimators of the MRL function in Section~\ref{sec:mrlf}.

Geenens~\cite{gee14}, also Wen and Wu~\cite{WW15}, used probit transformation to eliminate the boundary bias problems in the kernel density estimation for data on the unit interval. If we generalize their idea for any interval $\Omega$ and using any function $g$ that satisfies the conditions stated before, then we will have
\begin{gather*}
\widetilde{f}_X(t)=\frac{1}{nhg'(g^{-1}(t))}\sum_{i=1}^n K\left(\frac{g^{-1}(t)-g^{-1}(X_i)}{h}\right)
\end{gather*}
as the generalized boundary-free kernel density estimator by transformation. Then, by doing simple subtitution technique on $\int_{t}^{\omega''}\widetilde{f}_X(x)\mathrm{d}x$, the first proposed survival function estimator is
\begin{gather}
\widetilde{S}_{X,1}(t)=\frac{1}{n}\sum_{i=1}^n V_{1,h}(g^{-1}(t),g^{-1}(X_i)), \; \; \; t\in\Omega,
\end{gather}
where
\begin{gather}
V_{1,h}(x,y)=\frac{1}{h}\int_x^\infty K\left(\frac{z-y}{h}\right)\mathrm{d}z.
\end{gather}
Using the same approach, we define the first proposed cumulative survival function estimator as
\begin{gather}
\widetilde{\mathbb{S}}_{X,1}(t)=\frac{1}{n}\sum_{i=1}^n\mathbb{V}_{1,h}(g^{-1}(t),g^{-1}(X_i)), \; \; \; t\in\Omega,
\end{gather}
where
\begin{gather}
\mathbb{V}_{1,h}(x,y)=\int_x^\infty g'(z)V\left(\frac{z-y}{h}\right)\mathrm{d}z.
\end{gather}
Their biases and variances are given in the following theorem.
\begin{theorem}\label{thm:set1}
	Under the condition C1-C6, the biases and the variances of $\widetilde{S}_{X,1}(t)$ and $\widetilde{\mathbb{S}}_{X,1}(t)$ are
	\begin{gather}
	Bias[\widetilde{S}_{X,1}(t)]=-\frac{h^2}{2}b_1(t)\int_{-\infty}^\infty y^2 K(y)\mathrm{d}y+o(h^2),\\
	Var[\widetilde{S}_{X,1}(t)]=\frac{1}{n}S_X(t)F_X(t)-\frac{h}{n}g'(g^{-1}(t))f_X(t)\int_{-\infty}^\infty V(y)W(y)\mathrm{d}y+o\left(\frac{h}{n}\right),
	\end{gather}
	and
	\begin{gather}
	Bias[\widetilde{\mathbb{S}}_{X,1}(t)]=\frac{h^2}{2}b_2(t)\int_{-\infty}^\infty y^2 K(y)\mathrm{d}y+o(h^2),\\
	Var[\widetilde{\mathbb{S}}_{X,1}(t)]=\frac{1}{n}[2\bar{\mathbb{S}}_X(t)-\mathbb{S}_X^2(t)]+o\left(\frac{h}{n}\right),
	\end{gather}
	where
	\begin{gather}
	b_1(t)=g''(g^{-1}(t))f_X(t)+[g'(g^{-1}(t))]^2 f_X'(t),\\
	b_2(t)=[g'(g^{-1}(t))]^2 f_X(t)+\int_{g^{-1}(t)}^\infty g''(x)g'(x)f_X(g(x))\mathrm{d}x.
	\end{gather}
	Furthermore, the covariance of them is
	\begin{gather}
	Cov[\widetilde{\mathbb{S}}_{X,1}(t),\widetilde{S}_{X,1}(t)]=\frac{1}{n}\mathbb{S}_X(t)F_X(t)+o\left(\frac{h}{n}\right).
	\end{gather}
\end{theorem}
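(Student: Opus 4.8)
The plan is to exploit that both estimators are averages of $n$ i.i.d.\ summands, so each bias reduces to the expectation of a single summand minus the target, each variance to $n^{-1}$ times the variance of one summand, and the covariance to $n^{-1}$ times the covariance of the two summands. First I would record the change of variables $Y_i=g^{-1}(X_i)$, whose density is $f_Y(y)=f_X(g(y))g'(y)$ and whose survival function satisfies $S_Y(g^{-1}(t))=S_X(t)$ because $g$ is strictly increasing (C3). A short computation also simplifies the weight: the substitution $u=(z-y)/h$ gives $V_{1,h}(x,y)=V((x-y)/h)$, so $\widetilde{S}_{X,1}(t)$ is literally a $V$-smoothing of the $Y_i$ at the point $s:=g^{-1}(t)$.

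For the first moments I would handle the building block $E[V((z-Y)/h)]$ once and reuse it. Writing it as $\int V((z-y)/h)f_Y(y)\,\mathrm{d}y$, I integrate by parts to move the derivative onto the distribution (using $\tfrac{d}{dy}V((z-y)/h)=h^{-1}K((z-y)/h)$ and $V'=-K$); the boundary terms vanish since $S_Y(y)V((z-y)/h)\to 0$ at both ends. After $y=z-hw$ this becomes $\int S_Y(z-hw)K(w)\,\mathrm{d}w$, and a second-order Taylor expansion of $S_Y$ with the symmetry of $K$ yields $S_Y(z)+\tfrac{h^2}{2}S_Y''(z)\int w^2K(w)\,\mathrm{d}w+o(h^2)$. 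Putting $z=s$ and differentiating $S_Y(z)=S_X(g(z))$ twice gives $S_Y''(g^{-1}(t))=-b_1(t)$, which is exactly the stated bias of $\widetilde{S}_{X,1}$. For $\widetilde{\mathbb{S}}_{X,1}$ I would apply Fubini to pull $\int_s^\infty g'(z)\,\mathrm{d}z$ outside the expectation, insert the pointwise expansion above, and use $x=g(z)$ so that $\int_s^\infty g'(z)S_Y(z)\,\mathrm{d}z=\int_t^\infty S_X(x)\,\mathrm{d}x=\mathbb{S}_X(t)$; the $h^2$ coefficient $\int_s^\infty g'(z)S_Y''(z)\,\mathrm{d}z$ is then reshaped into $b_2(t)$ by one integration by parts, the main delicacy being that the boundary term $[(g'(z))^2f_X(g(z))]$ vanishes as $z\to\infty$, which is where C4--C6 enter.

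For the variances and the covariance I would compute the relevant second moments of the summands. For $Var[\widetilde{S}_{X,1}(t)]$ I would again integrate by parts, now on $\int V^2((s-y)/h)f_Y(y)\,\mathrm{d}y$, reduce it to $2\int S_Y(s-hw)V(w)K(w)\,\mathrm{d}w$, and expand: the leading term uses $\int V(w)K(w)\,\mathrm{d}w=\tfrac12$ and returns $S_X(t)$, while the $O(h)$ term needs the identity $2\int wV(w)K(w)\,\mathrm{d}w=-\int V(w)W(w)\,\mathrm{d}w$. Subtracting $(E[\,\cdot\,])^2=S_X^2(t)+O(h^2)$ then gives $n^{-1}S_X(t)F_X(t)$ together with the $h/n$ correction. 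For $Var[\widetilde{\mathbb{S}}_{X,1}(t)]$ and the covariance I would instead use the limit $V((z-Y)/h)\to I(Y>z)$, so the product of two such weights has expectation $S_Y(\max(z_1,z_2))+o(1)$; inserting this into the double integral $\int_s^\infty\int_s^\infty g'(z_1)g'(z_2)\,\mathrm{d}z_1\,\mathrm{d}z_2$, splitting according to whether $z_1>z_2$, and changing variables reduces the leading term to $2\int_t^\infty(x-t)S_X(x)\,\mathrm{d}x=2\bar{\mathbb{S}}_X(t)$, whereas the covariance integrand collapses to $S_Y(z)$ on $[s,\infty)$ and returns $\mathbb{S}_X(t)$.

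I expect two steps to be the real obstacles. The first is the identity $2\int wVK=-\int VW$: a direct integration by parts fails because $V\to 1$ at $-\infty$ and the boundary pieces diverge, so I would instead prove it by writing each side as a double integral over $\{z_1>z_2\}$ of $(z_1-z_2)K(z_1)K(z_2)$ and matching them by relabeling, using only $\int wK(w)\,\mathrm{d}w=0$ and $\int K=1$. The second is the error control for $Var[\widetilde{\mathbb{S}}_{X,1}(t)]$: the indicator limit is immediate, but sharpening the remainder to $o(h/n)$ requires bounding the near-diagonal region $z_1\approx z_2$ where the two $V$-weights overlap, for which I would dominate $g'$ against the kernel tails via C5. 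The remaining Taylor and change-of-variable computations are routine.
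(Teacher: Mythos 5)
Your proposal is correct in substance, and for the two biases and for $Var[\widetilde{S}_{X,1}(t)]$ it is essentially the paper's own argument: rewrite $V_{1,h}$ as a $V$-smoother at $s=g^{-1}(t)$, integrate by parts onto $S_Y$, Taylor-expand, identify $S_Y''(s)=-b_1(t)$, use Fubini plus one further integration by parts to turn $-\int_t^{\omega''}b_1$ into $b_2$, and invoke the kernel identities $\int VK=\tfrac12$ and $2\int wVK=-\int VW$. Where you genuinely depart from the paper is in the second moment of $\mathbb{V}_{1,h}$ and the cross moment: the paper never passes to indicators, but instead integrates by parts repeatedly through the auxiliary functions $a(y)=\int_y^\infty g'S_Y$ and $A(y)=\int_y^\infty g'a$, obtaining two explicit $O(h)$ contributions $\pm 2hg'(s)\mathbb{S}_X(t)\int VW$ that cancel, which is how the stated $o(h/n)$ remainder arises. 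Your route via $V((z-Y)/h)\to I(Y>z)$ and the double integral over $\{z_1>z_2\}$ reaches $2\bar{\mathbb{S}}_X(t)$ and $\mathbb{S}_X(t)$ more transparently, but you are right that it only gives $o(1)$ for free; to get $o(h)$ you must bound the near-diagonal strip $|z_1-z_2|=O(h)$, where the error is $O(h)$ pointwise on a set of width $O(h)$, and the resulting integral $h^2\int [g'(g^{-1}(x))]^2 f_X(x)\,\mathrm{d}x$ is finite precisely because of C5--C6 (e.g.\ it is $h^2E(X^2)$ for $g=\exp$). So your flagged obstacle is real but surmountable, and it is exactly the work the paper's cancellation does implicitly. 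One small correction: the identity $\int xV(x)K(x)\,\mathrm{d}x=-\tfrac12\int V(x)W(x)\,\mathrm{d}x$ does yield to a direct integration by parts --- taking $u=xV(x)$, $\mathrm{d}v=K(x)\,\mathrm{d}x$ gives $\int xVK=-\int VW+\int xWK$, and the symmetry $W(x)=V(-x)$ turns $\int xWK$ into $-\int xVK$, closing the loop with boundary terms $xV(x)W(x)\to0$ at both ends --- so your symmetrization proof is a valid alternative rather than a necessary repair.
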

\begin{remark}\label{rem:set1}
	Because $\frac{\mathrm{d}}{\mathrm{d}t}\widetilde{\mathbb{S}}_{X,1}(t)=-\widetilde{S}_{X,1}(t)$, it means that our first set of estimators preserves the relationship between the theoretical $\mathbb{S}_X(t)$ and $S_X(t)$.
\end{remark}

We have utilized the relationship among density, survival, and cumulative survival functions to construct the first set of estimators, now we are going to use another maneuver to build our second set of estimators. The second proposed survival function estimator is defined as
\begin{gather}
\widetilde{S}_{X,2}(t)=\frac{1}{n}\sum_{i=1}^n V_{2,h}(g^{-1}(t),g^{-1}(X_i)), \; \; \; t\in\Omega,
\end{gather}
where
\begin{gather}
V_{2,h}(x,y)=V\left(\frac{x-z}{h}\right).
\end{gather}
As we can see, $\widetilde{S}_{X,2}(t)$ is basically just a result of a simple subsitution of $g^{-1}(t)$ and $g^{-1}(X_i)$ to the formula of $\widehat{S}_X(t)$. This can be done due to the change-of-variable property of the survival function (for a brief explanation of the change-of-variable property, see Lemma~\ref{lemma:cv}). Though it is bit trickier, the change-of-variable property of the cumulative survival function leads us to the construction of our second proposed cumulative survival function estimator, which is
\begin{gather}
\widetilde{\mathbb{S}}_{X,2}(t)=\frac{1}{n}\sum_{i=1}^n\mathbb{V}_{2,h}(g^{-1}(t),g^{-1}(X_i)), \; \; \; t\in\Omega,
\end{gather}
where
\begin{gather}
\mathbb{V}_{2,h}(x,y)=\int_{-\infty}^y g'(z)V\left(\frac{x-z}{h}\right)\mathrm{d}z.
\end{gather}
In the above formula, multiplying $V$ with $g'$ is necessary to make sure that $\widetilde{\mathbb{S}}_{X,2}(t)$ is an estimator of $\mathbb{S}_X(t)$ (see equation (\ref{eq:a6})). Now, with $\widetilde{\mathbb{S}}_{X,2}(t)$ and $\widetilde{S}_{X,2}(t)$, their biases and variances are as follows.
\begin{theorem}\label{thm:set2}
	Under the condition C1-C6, the biases and the variances of $\widetilde{S}_{X,2}(t)$ and $\widetilde{\mathbb{S}}_{X,2}(t)$ are
	\begin{gather}
	Bias[\widetilde{S}_{X,2}(t)]=-\frac{h^2}{2}b_1(t)\int_{-\infty}^\infty y^2 K(y)\mathrm{d}y+o(h^2),\\
	Var[\widetilde{S}_{X,2}(t)]=\frac{1}{n}S_X(t)F_X(t)-\frac{h}{n}g'(g^{-1}(t))f_X(t)\int_{-\infty}^\infty V(y)W(y)\mathrm{d}y+o\left(\frac{h}{n}\right),
	\end{gather}
	and
	\begin{gather}
	Bias[\widetilde{\mathbb{S}}_{X,2}(t)]=\frac{h^2}{2}b_3(t)\int_{-\infty}^\infty y^2 K(y)\mathrm{d}y+o(h^2),\\
	Var[\widetilde{\mathbb{S}}_{X,2}(t)]=\frac{1}{n}[2\bar{\mathbb{S}}_X(t)-\mathbb{S}_X^2(t)]+o\left(\frac{h}{n}\right),
	\end{gather}
	where
	\begin{gather}
	b_3(t)=[g'(g^{-1}(t))]^2 f_X(t)-g''(g^{-1}(t))S_X(t).
	\end{gather}
	Furthermore, the covariance of them is
	\begin{gather}
	Cov[\widetilde{\mathbb{S}}_{X,2}(t),\widetilde{S}_{X,2}(t)]=\frac{1}{n}\mathbb{S}_X(t)F_X(t)+o\left(\frac{h}{n}\right).
	\end{gather}
\end{theorem}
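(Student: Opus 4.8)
The plan is to exploit the i.i.d. structure: each estimator is a sample mean of i.i.d. copies of a single summand evaluated at $Y=g^{-1}(X)$, so it suffices to compute the first two moments of $V_{2,h}(s,Y)$ and $\mathbb{V}_{2,h}(s,Y)$ and their product, where $s=g^{-1}(t)$, and then divide the centred second moments by $n$. The one identity driving every computation is the change-of-variable relation $S_Y(z)=\Pr(Y>z)=\Pr(X>g(z))=S_X(g(z))$, whose derivatives $S_Y'(z)=-g'(z)f_X(g(z))$ and so on generate the factors $g'$, $g''$ appearing in $b_1$ and $b_3$.

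For the survival-function part, I would first note that $V_{2,h}(x,y)=V((x-y)/h)$ coincides with $V_{1,h}(x,y)$ (substitute $w=(z-y)/h$ in the definition of $V_{1,h}$), so that $\widetilde S_{X,2}(t)=\widetilde S_{X,1}(t)$ identically and its bias and variance are exactly those already proved in Theorem~\ref{thm:set1}. If instead I compute directly, I would write $E[V((s-Y)/h)]=\int_{-\infty}^\infty V((s-y)/h)f_Y(y)\,dy$ and integrate by parts against $v=-S_Y(y)$ (the boundary terms vanish because $S_Y(y)\to0$ as $y\to+\infty$ while $V((s-y)/h)\to0$ as $y\to-\infty$); substituting $y=s-hu$ and Taylor-expanding $S_Y(s-hu)$ yields $S_Y(s)=S_X(t)$ at zeroth order and $-\tfrac{h^2}{2}b_1(t)\int_{-\infty}^\infty y^2K(y)\,dy$ at second order, using $S_Y''(s)=-b_1(t)$. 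For the variance the same manoeuvre on $E[V((s-Y)/h)^2]$ gives $2\int_{-\infty}^\infty S_Y(s-hu)V(u)K(u)\,du$; the leading term uses $\int V K=\tfrac12$, producing $S_X(t)-S_X^2(t)=S_X(t)F_X(t)$, and the $O(h)$ term requires the identity $2\int_{-\infty}^\infty uV(u)K(u)\,du=-\int_{-\infty}^\infty V(u)W(u)\,du$, which I would establish from $V=1-W$, $K=-V'=W'$, and one integration by parts whose boundary terms vanish by the finite second moment of $K$.

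The genuinely new computation is the cumulative-survival part. For the bias I would apply Fubini to interchange the $z$- and $y$-integrals in $E[\mathbb{V}_{2,h}(s,Y)]$, obtaining $\int_{-\infty}^\infty g'(z)V((s-z)/h)S_Y(z)\,dz$. I would then integrate by parts against the antiderivative $\Psi(z):=\mathbb{S}_X(g(z))$, which satisfies $\Psi'(z)=-g'(z)S_Y(z)$ and vanishes at $+\infty$ (while $V((s-z)/h)\to0$ at $-\infty$), so every boundary term dies and I am left with $\int_{-\infty}^\infty \Psi(s-hu)K(u)\,du$. A Taylor expansion gives $\Psi(s)=\mathbb{S}_X(t)$ at zeroth order and $\tfrac{h^2}{2}\Psi''(s)\int_{-\infty}^\infty y^2K(y)\,dy$ at second order; differentiating $\Psi$ twice and inserting $S_Y'(z)=-g'(z)f_X(g(z))$ yields $\Psi''(s)=[g'(g^{-1}(t))]^2f_X(t)-g''(g^{-1}(t))S_X(t)=b_3(t)$, which is precisely the claimed bias.

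For the two variances and the covariance I would pass to the pointwise limits of the summands as $h\to0$, namely $V_{2,h}(s,Y)\to I(X>t)$ and $\mathbb{V}_{2,h}(s,Y)\to(X-t)I(X>t)$ (the latter because $\int_{-\infty}^{Y}g'(z)I(z>s)\,dz=(g(Y)-t)I(Y>s)=(X-t)I(X>t)$), and evaluate the limiting moments. Using integration by parts twice together with C6 I would show $E[(X-t)^2I(X>t)]=2\bar{\mathbb{S}}_X(t)$ and $E[(X-t)I(X>t)]=\mathbb{S}_X(t)$, giving $Var[\widetilde{\mathbb{S}}_{X,2}(t)]=\tfrac1n[2\bar{\mathbb{S}}_X(t)-\mathbb{S}_X^2(t)]+o(h/n)$, and from $E[(X-t)I(X>t)^2]=\mathbb{S}_X(t)$ the covariance $\tfrac1n\mathbb{S}_X(t)F_X(t)+o(h/n)$. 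The hard part will not be any single expansion but the rigorous control of the remainders: justifying the interchange of limit and integration and bounding the Taylor remainders uniformly relies on the finiteness conditions C5, which keep $\int_{-\infty}^\infty g'(s-hu)K(u)\,du$ and $\int_{-\infty}^\infty g'(s-hu)V(u)\,du$ finite for small $h$, together with the moments C6; in particular, verifying that the $O(h)$ contribution to $Var[\widetilde{\mathbb{S}}_{X,2}(t)]$ is genuinely absorbed into the $o(h/n)$ term, rather than surviving as it does in the survival-function case, is the most delicate point and is where I would concentrate the estimates.
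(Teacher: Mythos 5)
Your treatment of the two biases and of $Var[\widetilde{S}_{X,2}(t)]$ is correct and essentially the paper's own route: the paper also reduces $E[\widetilde{\mathbb{S}}_{X,2}(t)]$ to $\int_{-\infty}^\infty g'(y)V\bigl(\frac{s-y}{h}\bigr)S_Y(y)\,\mathrm{d}y$ (by differentiating $\mathbb{V}_{2,h}$ rather than by Fubini, but that is cosmetic), then integrates by parts against the function you call $\Psi$ (the paper's $a(y)=\int_y^\infty g'(z)S_Y(z)\,\mathrm{d}z$, with $a(g^{-1}(t))=\mathbb{S}_X(t)$ by Lemma~\ref{lemma:cv}) and Taylor-expands to get $a''(s)=b_3(t)$. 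Your observation that $V_{1,h}(x,y)=V\bigl(\frac{x-y}{h}\bigr)=V_{2,h}(x,y)$, so that $\widetilde{S}_{X,2}(t)\equiv\widetilde{S}_{X,1}(t)$ and its moments can simply be quoted from Theorem~\ref{thm:set1}, is a small but genuine shortcut the paper does not exploit.

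The gap is in the variance of $\widetilde{\mathbb{S}}_{X,2}(t)$ and the covariance. Passing to the pointwise limits $V_{2,h}(s,Y)\to I(X>t)$ and $\mathbb{V}_{2,h}(s,Y)\to(X-t)I(X>t)$ and computing the limiting moments (all of which you do correctly: $E[(X-t)^2I(X>t)]=2\bar{\mathbb{S}}_X(t)$, etc.) only identifies the $n^{-1}$-leading constants; it yields a remainder $o(1/n)$, which is strictly weaker than the $o(h/n)$ asserted in the theorem, since $h\to0$. To justify $o(h/n)$ you must expand $E[\mathbb{V}_{2,h}^2(s,Y)]$ and $E[\mathbb{V}_{2,h}(s,Y)V_{2,h}(s,Y)]$ to first order in $h$ and exhibit the cancellation explicitly. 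The paper does exactly this: after one integration by parts,
\begin{gather*}
E[\mathbb{V}_{2,h}^2(s,Y)]=2\int_{-\infty}^\infty\Bigl[g'(y)V^2\Bigl(\tfrac{s-y}{h}\Bigr)+\tfrac{1}{h}\mathbb{V}_{2,h}(s,y)K\Bigl(\tfrac{s-y}{h}\Bigr)\Bigr]a(y)\,\mathrm{d}y,
\end{gather*}
and the two pieces contribute $2\bar{\mathbb{S}}_X(t)-2hg'(g^{-1}(t))\mathbb{S}_X(t)\int V(u)W(u)\,\mathrm{d}u+o(h)$ and $+2hg'(g^{-1}(t))\mathbb{S}_X(t)\int V(u)W(u)\,\mathrm{d}u+o(h)$ respectively, so the order-$h$ terms cancel; an identical cancellation (with $\mathbb{S}_X$ replaced by $S_X$ in the $O(h)$ coefficients) gives the covariance. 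You correctly flag this as the delicate point, but flagging it is not proving it: as written, your argument establishes the theorem only with $o(1/n)$ in place of $o(h/n)$ for these two quantities, and the cancellation is the one substantive computation your proposal omits.
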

\begin{remark}\label{rem:set2a}
	As we can see in Theorem~\ref{thm:set1} and Theorem~\ref{thm:set2}, a lot of similarities are possessed by both sets of estimators. For example, both of them have the same covariances, which means the statistical relationship between $\widetilde{S}_{X,2}(t)$ and $\widetilde{\mathbb{S}}_{X,2}(t)$ is same to the one of $\widetilde{S}_{X,1}(t)$ and $\widetilde{\mathbb{S}}_{X,1}(t)$.
\end{remark}
\begin{remark}\label{rem:set2b}
	We can prove that both $\widetilde{S}_{X,1}(\omega')$ and $\widetilde{S}_{X,2}(\omega')$ are always equal to $1$ (see Appendix G), and it is obvious that both $\widetilde{S}_{X,1}(\omega'')$ and $\widetilde{S}_{X,2}(\omega'')$ are $0$. Hence, it is clear that their variances are $0$ when $t$ approaches the boundaries. This is one of the reasons our proposed methods outperform the naive kernel estimator.
\end{remark}

\section{Estimators of the mean residual life function}
\label{sec:mrlf}
In this section, we will discuss the estimation for the mean residual life function. As we already have defined the survival function and the cumulative survival function estimators, we just need to plug them into the MRL function formula. Hence, our proposed estimators of the mean excess loss function are
\begin{gather}
\widetilde{m}_{X,1}(t)=\frac{\widetilde{\mathbb{S}}_{X,1}(t)}{\widetilde{S}_{X,1}(t)}=\frac{h\sum_{i=1}^n\int_{g^{-1}(t)}^\infty g'(z)V\left(\frac{z-g^{-1}(X_i)}{h}\right)\mathrm{d}z}{\sum_{i=1}^n\int_{g^{-1}(t)}^\infty K\left(\frac{z-g^{-1}(X_i)}{h}\right)\mathrm{d}z}, \; \; \; t\in\Omega,
\end{gather}
and
\begin{gather}
\widetilde{m}_{X,2}(t)=\frac{\widetilde{\mathbb{S}}_{X,2}(t)}{\widetilde{S}_{X,2}(t)}=\frac{\sum_{i=1}^n\int_{-\infty}^{g^{-1}(X_i)}g'(z)V\left(\frac{g^{-1}(t)-z}{h}\right)\mathrm{d}z}{\sum_{i=1}^n V\left(\frac{g^{-1}(t)-g^{-1}(X_i)}{h}\right)}, \; \; \; t\in\Omega.
\end{gather}
At first glance, $\widetilde{m}_{X,1}(t)$ seems more representative to the theoretical $m_X(t)$, since the mathematical relationship between $\widetilde{S}_{X,1}(t)$ and $\widetilde{\mathbb{S}}_{X,1}(t)$ are same as the relationship between the numerator and the denumerator of $m_X(t)$, as stated in Remark~\ref{rem:set1}. This is not a major problem for $\widetilde{m}_{X,2}(t)$, as we stated in Remark~\ref{rem:set2a} that the relationship between $\widetilde{S}_{X,2}(t)$ and $\widetilde{\mathbb{S}}_{X,2}(t)$ is statistically same to the relationship between $\widetilde{S}_{X,1}(t)$ and $\widetilde{\mathbb{S}}_{X,1}(t)$. However, when a statistician wants to keep the mathematical relationship between the survival and the cumulative survival functions in their estimates, it is suggested to use $\widetilde{m}_{X,1}(t)$ instead.
\begin{theorem}\label{thm:bmrlf}
	Under the condition C1-C6, the biases and the variances of $\widetilde{m}_{X,i}(t)$, $i=1,2$, are
	\begin{gather}
	Bias[\widetilde{m}_{X,1}(t)]=\frac{h^2}{2S_X(t)}[b_2(t)+m_X(t)b_1(t)]\int_{-\infty}^\infty y^2 K(y)\mathrm{d}y+o(h^2),\\
	Bias[\widetilde{m}_{X,2}(t)]=\frac{h^2}{2S_X(t)}[b_3(t)+m_X(t)b_1(t)]\int_{-\infty}^\infty y^2 K(y)\mathrm{d}y+o(h^2),\\
	Var[\widetilde{m}_{X,i}(t)]=\frac{1}{n}\frac{b_4(t)}{S_X^2(t)}-\frac{h}{n}\frac{b_5(t)}{S_X^2(t)}\int_{-\infty}^\infty V(y)W(y)\mathrm{d}y+o\left(\frac{h}{n}\right),
	\end{gather}
	where
	\begin{gather}
	b_4(t)=2\bar{\mathbb{S}}_X(t)-S_X(t)m_X^2(t) \; \; \;  \mathrm{and} \; \; \; b_5(t)=g'(g^{-1}(t))f_X(t)m_X^2(t).
	\end{gather}
\end{theorem}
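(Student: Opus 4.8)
The plan is to treat each estimator $\widetilde{m}_{X,i}(t)=\widetilde{\mathbb{S}}_{X,i}(t)/\widetilde{S}_{X,i}(t)$ as a ratio whose numerator and denominator already have their first two moments pinned down by Theorem~\ref{thm:set1} and Theorem~\ref{thm:set2}, and to push both through a stochastic Taylor (delta-method) expansion for a ratio. Write $A_i=\widetilde{\mathbb{S}}_{X,i}(t)$, $B_i=\widetilde{S}_{X,i}(t)$, abbreviate $S=S_X(t)$, $\mathbb{S}=\mathbb{S}_X(t)$, $F=F_X(t)=1-S$, $m=m_X(t)=\mathbb{S}/S$, and introduce the centred variables $a_i=A_i-E(A_i)$, $b_i=B_i-E(B_i)$. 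Since $t$ is an interior point we have $S>0$ and $E(B_i)=S+O(h^2)$ bounded away from $0$, so apart from an event of asymptotically negligible probability the denominator stays bounded below and the expansion
\begin{gather*}
\frac{A_i}{B_i}=\frac{E(A_i)}{E(B_i)}+\frac{a_i}{E(B_i)}-\frac{E(A_i)}{E(B_i)^2}b_i-\frac{a_ib_i}{E(B_i)^2}+\frac{E(A_i)}{E(B_i)^3}b_i^2+R_i
\end{gather*}
is legitimate, with $R_i$ gathering the third- and higher-order terms.

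For the bias I would take expectations, so that the linear terms drop out and there remain the ratio-of-means term together with the two second-order corrections. Expanding $E(A_i)/E(B_i)$ with $E(A_i)=\mathbb{S}+Bias[A_i]$ and $E(B_i)=S+Bias[B_i]$ yields the leading contribution $S^{-1}(Bias[A_i]-m\,Bias[B_i])$, into which I substitute the $O(h^2)$ biases from the respective theorems to obtain $\frac{h^2}{2S}[b_2+mb_1]\int_{-\infty}^\infty y^2K(y)\mathrm{d}y$ for $i=1$, and the same expression with $b_2$ replaced by $b_3$ for $i=2$. The decisive observation is that the surviving second-order terms $-Cov[A_i,B_i]/E(B_i)^2+E(A_i)Var[B_i]/E(B_i)^3$, although individually of size $1/n$, have cancelling leading parts: with $Cov[A_i,B_i]\sim n^{-1}\mathbb{S}F$ and $Var[B_i]\sim n^{-1}SF$ one gets $-\mathbb{S}F/(nS^2)+\mathbb{S}\cdot SF/(nS^3)=0$, leaving only an $O(h/n)$ residue, which is $o(h^2)$ by $nh\to\infty$. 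This is exactly what confines the bias to the pure $O(h^2)$ order stated.

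For the variance I would retain only the linear part of the expansion, giving
\begin{gather*}
Var\left[\frac{A_i}{B_i}\right]=\frac{Var[A_i]}{E(B_i)^2}-\frac{2E(A_i)}{E(B_i)^3}Cov[A_i,B_i]+\frac{E(A_i)^2}{E(B_i)^4}Var[B_i]+o\left(\frac{h}{n}\right),
\end{gather*}
where the $O(h^2)$ gaps between $E(A_i),E(B_i)$ and $\mathbb{S},S$ feed only $O(h^2/n)=o(h/n)$ into the result and may be ignored. Inserting the variances and covariance from Theorem~\ref{thm:set1}/Theorem~\ref{thm:set2} (which coincide for $i=1,2$) and simplifying the $1/n$ terms through $F=1-S$ and $\mathbb{S}^2/S=Sm^2$ collapses them to $\frac{1}{nS^2}(2\bar{\mathbb{S}}_X(t)-Sm^2)=\frac{b_4(t)}{nS^2}$; the single $O(h/n)$ piece comes from the $-\frac{h}{n}g'(g^{-1}(t))f_X(t)\int_{-\infty}^\infty V(y)W(y)\mathrm{d}y$ term inside $Var[B_i]$ weighted by $\mathbb{S}^2/S^4$, which equals $-\frac{h}{n}\frac{b_5(t)}{S^2}\int_{-\infty}^\infty V(y)W(y)\mathrm{d}y$ because $b_5(t)=g'(g^{-1}(t))f_X(t)m^2$. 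This reproduces the stated variance for both estimators at once.

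The main obstacle I expect is not the algebra but the rigorous control of the remainder: showing $E(R_i)=o(h^2)$ for the bias and that the discarded quadratic and cubic terms contribute only $o(h/n)$ to the variance. This calls for a truncation or uniform-integrability argument handling the low-probability event on which $\widetilde{S}_{X,i}(t)$ is atypically close to $0$, together with bounds on the third central moments of $A_i$ and $B_i$; the moment assumptions in C6 and the interior-point condition $S_X(t)>0$ are precisely what make this control available. A secondary check is to confirm that the $O(h/n)$ terms of $Var[A_i]$ and $Cov[A_i,B_i]$, which Theorem~\ref{thm:set1} and Theorem~\ref{thm:set2} absorb into their $o(h/n)$ remainders, genuinely cannot disturb the displayed $O(h/n)$ coefficient of the variance.
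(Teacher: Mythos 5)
Your proposal is correct and takes essentially the same route as the paper: linearize the ratio $\widetilde{\mathbb{S}}_{X,i}(t)/\widetilde{S}_{X,i}(t)$ and substitute the component biases, variances, and covariance from Theorems~\ref{thm:set1} and~\ref{thm:set2}, with the variance reduced to $S_X^{-2}(t)\,Var[\widetilde{\mathbb{S}}_{X,i}(t)-m_X(t)\widetilde{S}_{X,i}(t)]$. The one point where you go beyond the paper is the explicit verification that the second-order terms $-Cov[A_i,B_i]/E(B_i)^2+E(A_i)Var[B_i]/E(B_i)^3$ cancel at order $1/n$ (the paper absorbs this into a $[1+o_p(1)]$ factor and discards it), and that check is genuinely needed for the stated $o(h^2)$ bias remainder, since C2 alone does not force $1/n=o(h^2)$.
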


Similar to most of kernel type estimators, our proposed estimators attain asymptotic normality, as stated in Theorem~\ref{thm:normal}.
\begin{theorem}\label{thm:normal}
	Under the condition C1-C6, the limiting distribution
	\begin{gather}
	\frac{\widetilde{m}_{X,i}(t)-m_X(t)}{\sqrt{Var[\widetilde{m}_{X,i}(t)]}}\rightarrow_D N(0,1)
	\end{gather}
	holds for $i=1,2$.
\end{theorem}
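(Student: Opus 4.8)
The plan is to view each $\widetilde{m}_{X,i}(t)$ as the ratio of two sample means and to combine a joint central limit theorem for the pair $(\widetilde{\mathbb{S}}_{X,i}(t),\widetilde{S}_{X,i}(t))$ with the delta method. Writing $U_{n,j}=\mathbb{V}_{i,h}(g^{-1}(t),g^{-1}(X_j))$ and $T_{n,j}=V_{i,h}(g^{-1}(t),g^{-1}(X_j))$, we have $\widetilde{\mathbb{S}}_{X,i}(t)=\frac1n\sum_{j}U_{n,j}$ and $\widetilde{S}_{X,i}(t)=\frac1n\sum_{j}T_{n,j}$, where for each fixed $n$ the pairs $(U_{n,j},T_{n,j})$ are i.i.d.\ over $j$. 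Because the bandwidth $h=h_n$ changes with $n$, this is a triangular array, so I would use a Lindeberg--Feller / Lyapunov type statement rather than the classical i.i.d.\ CLT.

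First I would prove joint asymptotic normality of $\sqrt{n}\bigl(\widetilde{\mathbb{S}}_{X,i}(t)-E[\widetilde{\mathbb{S}}_{X,i}(t)],\,\widetilde{S}_{X,i}(t)-E[\widetilde{S}_{X,i}(t)]\bigr)$ through the Cram\'{e}r--Wold device: for arbitrary $\lambda_1,\lambda_2$ I would check Lyapunov's condition with exponent $3$ for $\lambda_1U_{n,1}+\lambda_2T_{n,1}$. The denominator terms are bounded since $0\le V\le1$, and in both cases $i=1,2$ the numerator terms satisfy $U_{n,1}\to(X_1-t)I(X_1>t)$ as $h\to0$, whose third absolute moment is finite precisely because $E(X^3)$ exists by \textbf{C6}; hence $E|\lambda_1U_{n,1}+\lambda_2T_{n,1}|^{3}$ stays bounded in $n$ and Lyapunov's ratio is $O(n^{-1/2})\to0$. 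The limiting covariance matrix $\Sigma$ is read directly off Theorem~\ref{thm:set1} and Theorem~\ref{thm:set2}, namely $\Sigma_{11}=2\bar{\mathbb{S}}_X(t)-\mathbb{S}_X^2(t)$, $\Sigma_{22}=S_X(t)F_X(t)$, and $\Sigma_{12}=\mathbb{S}_X(t)F_X(t)$.

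Next I would apply the delta method to $\phi(a,b)=a/b$ at $(\mathbb{S}_X(t),S_X(t))$, which is legitimate because $S_X(t)>0$ for interior $t$ while $\widetilde{S}_{X,i}(t)\rightarrow_p S_X(t)$ and $\widetilde{\mathbb{S}}_{X,i}(t)\rightarrow_p\mathbb{S}_X(t)$ (their mean squared errors vanish by the bias and variance rates of Theorems~\ref{thm:set1}--\ref{thm:set2}). With gradient $\nabla\phi=(S_X^{-1}(t),-m_X(t)S_X^{-1}(t))$, the first-order expansion is
\[
\widetilde{m}_{X,i}(t)-m_X(t)=\frac{\widetilde{\mathbb{S}}_{X,i}(t)-\mathbb{S}_X(t)}{S_X(t)}-\frac{m_X(t)\bigl(\widetilde{S}_{X,i}(t)-S_X(t)\bigr)}{S_X(t)}+R_n .
\]
Splitting each increment into its mean-zero stochastic part and its bias, the stochastic part is the linear combination handled by the joint CLT above, and evaluating $\nabla\phi^{\top}\Sigma\,\nabla\phi$ reproduces $b_4(t)/S_X^2(t)$ after using $m_X=\mathbb{S}_X/S_X$ and $S_X+F_X=1$, matching the leading variance in Theorem~\ref{thm:bmrlf}; the bias parts combine into the stated $Bias[\widetilde{m}_{X,i}(t)]$, and the quadratic remainder obeys $R_n=O_p(n^{-1})=o_p(n^{-1/2})$.

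The final and most delicate point is replacing the stochastic centering by $m_X(t)$: since $Bias[\widetilde{m}_{X,i}(t)]=O(h^2)$ while $\sqrt{Var[\widetilde{m}_{X,i}(t)]}\asymp n^{-1/2}$, the ratio $Bias/\sqrt{Var}=O(\sqrt{n}\,h^2)$ must vanish, which asks for the undersmoothing requirement $nh^4\to0$ beyond \textbf{C2}; under it, dividing the displayed expansion by $\sqrt{Var[\widetilde{m}_{X,i}(t)]}$ and invoking Slutsky's theorem yields the claim for $i=1,2$. I expect the main obstacle to be the rigorous verification of the Lyapunov condition for the triangular array, i.e.\ controlling the third moment of $U_{n,1}$ uniformly in $n$; by contrast the positivity of the asymptotic variance is automatic, since $b_4(t)=S_X(t)\,\mathrm{Var}(X-t\mid X>t)>0$, and the delta-method linearisation together with the remainder bound is routine given the moment computations already in hand.
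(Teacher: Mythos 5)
Your proof is correct and reaches the stated conclusion, but by a more careful route than the paper's own. The paper verifies a Lyapunov condition with exponent $2+\delta$, $\delta\le 1$, separately for $V_{i,h}$ (bounded by $1$) and for $\mathbb{V}_{i,h}$ (bounded via $E(X^{2+\delta})<\infty$ from C6) --- essentially the same moment bounds you use --- but then concludes only the two \emph{marginal} normalities of $\widetilde{S}_{X,i}(t)$ and $\widetilde{\mathbb{S}}_{X,i}(t)$ and invokes ``Slutsky's Theorem for rational function'' to pass to the ratio. As written that last step is incomplete: the limit law of a ratio of two jointly non-degenerate asymptotically normal statistics is not determined by their marginals. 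Your Cram\'{e}r--Wold joint CLT followed by the delta method (equivalently, a single Lyapunov CLT for the linear combination $\mathbb{V}_{i,h}-m_X(t)V_{i,h}$ divided by the consistent denominator $\widetilde{S}_{X,i}(t)$, which is the linearisation the paper's Appendix D already uses for the bias and variance) is the rigorous version of what the paper intends, and your check that $\nabla\phi^{\top}\Sigma\,\nabla\phi=b_4(t)/S_X^2(t)$ ties it back to Theorem~\ref{thm:bmrlf}. You also correctly flag something the paper passes over silently: centering at $m_X(t)$ rather than at $E[\widetilde{m}_{X,i}(t)]$ requires $\sqrt{n}\,h^2\rightarrow 0$, an undersmoothing condition not implied by C2; without it the displayed convergence holds only with the stochastic centering. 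The only cosmetic difference is your use of Lyapunov exponent $3$, which is the case $\delta=1$ of the paper's $2+\delta$ and is admissible precisely because C6 assumes $E(X^3)<\infty$.
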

Furthermore, we also establish strong consistency of the proposed estimators in the form of the following theorem.
\begin{theorem}\label{thm:consistent}
	Under the condition C1-C6, the consistency
	\begin{gather}
	\sup_{t\in\Omega}|\widetilde{m}_{X,i}(t)-m_X(t)|\rightarrow_{a.s.}0
	\end{gather}
	holds for $i=1,2$.
\end{theorem}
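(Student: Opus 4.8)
The plan is to reduce the statement to the strong uniform consistency of the numerator and denominator estimators separately, and then to control their ratio. The starting point is the algebraic identity, valid whenever $\widetilde{S}_{X,i}(t)>0$,
\[
\widetilde{m}_{X,i}(t)-m_X(t)=\frac{1}{\widetilde{S}_{X,i}(t)}\Big\{[\widetilde{\mathbb{S}}_{X,i}(t)-\mathbb{S}_X(t)]-m_X(t)[\widetilde{S}_{X,i}(t)-S_X(t)]\Big\},
\]
which isolates the two estimation errors and concentrates all the difficulty into the factor $1/\widetilde{S}_{X,i}(t)$ as $t$ approaches $\omega''$.

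First I would prove that $\sup_{t\in\Omega}|\widetilde{S}_{X,i}(t)-S_X(t)|\to_{a.s.}0$ and $\sup_{t\in\Omega}|\widetilde{\mathbb{S}}_{X,i}(t)-\mathbb{S}_X(t)|\to_{a.s.}0$. Here the transformation is the key simplification: writing $Y_i=g^{-1}(X_i)$ and $s=g^{-1}(t)$, the survival estimator becomes $\widetilde{S}_{X,i}(t)=\frac{1}{n}\sum_{i=1}^n V((s-Y_i)/h)$, i.e.\ the ordinary kernel survival estimator of $Y$ evaluated at $s$, and since $g^{-1}$ is a bijection of $\Omega$ onto $\mathbb{R}$ the supremum over $t\in\Omega$ equals the supremum over $s\in\mathbb{R}$. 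Strong uniform consistency then follows from the Glivenko--Cantelli theorem for the empirical survival function of $Y$ together with a deterministic smoothing bias that is $O(h)$ uniformly in $s$, using the uniform continuity of the survival function of $Y$; condition C4 guarantees the smoothness needed for the bias bound and C2 sends $h\to0$. The cumulative survival estimator is handled the same way after an integration, with the tail of the integral controlled by the moment assumption C6 so that the convergence is uniform all the way to $\omega''$.

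The main obstacle is the factor $1/\widetilde{S}_{X,i}(t)$, which blows up near $\omega''$ where the denominator vanishes. To handle it I would split $\Omega=[\omega',T]\cup(T,\omega'')$. On the compact piece $[\omega',T]$ we have $S_X(t)\ge S_X(T)>0$, so by the uniform consistency just established $\widetilde{S}_{X,i}(t)$ is eventually bounded below by $S_X(T)/2$ almost surely, and the identity above, with $m_X$ bounded on $[\omega',T]$, gives a uniform bound by a constant multiple of the two component suprema. On the tail $(T,\omega'')$ I would use monotonicity: since $V$ is decreasing, $\widetilde{S}_{X,1}$ is a genuine nonincreasing function, and by Remark~\ref{rem:set1} one has $\widetilde{\mathbb{S}}_{X,1}(t)=\int_t^{\omega''}\widetilde{S}_{X,1}(x)\,\mathrm{d}x$, whence
\[
\widetilde{m}_{X,1}(t)=\int_t^{\omega''}\frac{\widetilde{S}_{X,1}(x)}{\widetilde{S}_{X,1}(t)}\,\mathrm{d}x\le \omega''-t,
\]
and the same bound holds for $m_X(t)$; thus for $\omega''<\infty$ both functions are squeezed to $0$ and, since both are nonnegative, their difference on the tail is at most $\omega''-T$, which is made small by choosing $T$ close to $\omega''$. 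When $\omega''=\infty$ the analogous tail control comes from C6, the finite moments forcing $\mathbb{S}_X(t)$ and $S_X(t)$ to decay fast enough that the tail contribution to the supremum is bounded, uniformly in $n$ for large $n$, by a quantity tending to $0$ as $T\to\infty$. The estimator $\widetilde{m}_{X,2}$ is treated either by the same explicit tail estimate from its defining integral or by invoking its asymptotic equivalence to $\widetilde{m}_{X,1}$ through the matching biases, variances, and covariance of Theorems~\ref{thm:set1} and~\ref{thm:set2}. Combining the compact and tail bounds and letting $T\to\omega''$ after $n\to\infty$ yields the claim.
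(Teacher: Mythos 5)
Your overall strategy --- establish uniform strong consistency of $\widetilde{S}_{X,i}$ and $\widetilde{\mathbb{S}}_{X,i}$ separately, then pass to the ratio --- is the same as the paper's, but the middle and final steps are executed differently. For the survival part the paper simply cites Nadaraya's uniform a.s.\ consistency of $\widehat{S}_Y$ on $\mathbb{R}$ and transfers it to $\Omega$ through the bijection; that is essentially the Glivenko--Cantelli-plus-uniform-bias argument you sketch, so no real difference there. For the cumulative survival part the paper does not integrate the survival error as you propose; it invokes its Lemma~\ref{lemma:a(t)}, which upgrades pointwise a.s.\ convergence to uniform convergence using only monotonicity and boundedness of $\widehat{a}$ and $a$. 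That route avoids the tail-truncation issue your integration argument faces when $\omega''=\infty$ (one cannot bound $\int_t^{\infty}|\widetilde{S}_{X,1}(x)-S_X(x)|\,\mathrm{d}x$ by the supremum times the length of the interval), and it applies verbatim to $\widetilde{\mathbb{S}}_{X,2}$, which is also nonincreasing in $t$ and bounded by $\bar{X}-\omega'$; note that your fallback for $i=2$ of ``asymptotic equivalence through matching biases, variances, and covariance'' is not a valid substitute, since moment asymptotics do not deliver almost-sure statements. Where your proposal genuinely adds value is the last step: the paper passes from the two component results to $\sup_{t\in\Omega}|\widetilde{m}_{X,i}(t)-m_X(t)|\rightarrow_{a.s.}0$ without comment, whereas the division by $\widetilde{S}_{X,i}(t)$, which vanishes as $t\rightarrow\omega''$, is exactly where the difficulty sits. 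Your compact/tail splitting, with the bound $\widetilde{m}_{X,1}(t)\le\omega''-t$ obtained from monotonicity of $\widetilde{S}_{X,1}$ and Remark~\ref{rem:set1}, cleanly settles the finite-$\omega''$ case. The $\omega''=\infty$ case remains heuristic in your write-up --- and is genuinely delicate, since under C6 alone $m_X(t)$ need not be bounded --- but the paper offers nothing there either, so on this point your argument is the more honest of the two.
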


The last property that we would like to discuss is the behaviour of our proposed estimators when $t$ is in the boundary regions. As stated in Section~\ref{intro}, we want our estimators to preserve the behaviour of the theoretical MRL function, specifically the property of $m_X(\omega')=E(X)-\omega'$. If we can prove this, then not only will our proposed methods be free of boundary problems, but also superior in the sense of them preserving the key property of the MRL function.
\begin{theorem}\label{thm:preserve}
	Let $\widetilde{m}_{X,1}(t)$ and $\widetilde{m}_{X,2}(t)$ be the transformed kernel mean residual life function estimators. Then
	\begin{gather}
	\widetilde{m}_{X,1}(\omega')+\omega'=\bar{X}+O_p(h^2),
	\end{gather}
	and
	\begin{gather}\label{eq:unbiased}
	\widetilde{m}_{X,2}(\omega')+\omega'=\bar{X}.
	\end{gather}
\end{theorem}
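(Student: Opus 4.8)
The plan is to pass to the limit $t\to\omega'$, where the strictly increasing bijection $g:\mathbb{R}\to\Omega$ (condition C3) forces $g^{-1}(t)\to-\infty$ and $g(z)\to\omega'$ as $z\to-\infty$. The two facts I would use repeatedly are $V(-\infty)=\int_{-\infty}^\infty K(z)\mathrm{d}z=1$ and $V(+\infty)=0$.

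For $\widetilde{m}_{X,2}(\omega')$ the computation is exact. Writing $s=g^{-1}(\omega')\to-\infty$, every factor $V\big(\tfrac{s-z}{h}\big)$ and $V\big(\tfrac{s-g^{-1}(X_i)}{h}\big)$ collapses to $V(-\infty)=1$. The denominator becomes $\sum_{i=1}^n 1=n$, while each numerator integral telescopes,
\[
\int_{-\infty}^{g^{-1}(X_i)}g'(z)\,\mathrm{d}z=g(g^{-1}(X_i))-\lim_{z\to-\infty}g(z)=X_i-\omega'.
\]
Dividing gives $\widetilde{m}_{X,2}(\omega')=\bar X-\omega'$ with no error term, which is exactly (\ref{eq:unbiased}).

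For $\widetilde{m}_{X,1}(\omega')$ the denominator is again exact: substituting $u=(z-g^{-1}(X_i))/h$ shows $\int_{-\infty}^\infty K\big(\tfrac{z-g^{-1}(X_i)}{h}\big)\mathrm{d}z=h$, so the denominator equals $nh$. The numerator is the delicate part, because a naive change of variable fails --- $\int_{-\infty}^\infty V(u)\,\mathrm{d}u$ diverges, which is exactly the nonintegrability the paper warns about. I would instead integrate by parts with $v=g(z)$ and $u=V\big(\tfrac{z-g^{-1}(X_i)}{h}\big)$, so that $\mathrm{d}u=-\tfrac{1}{h}K\big(\tfrac{z-g^{-1}(X_i)}{h}\big)\mathrm{d}z$. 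The boundary term $\big[g(z)V(\cdots)\big]_{-\infty}^{\infty}$ contributes $-\omega'$ (using $g(-\infty)=\omega'$ and $V(-\infty)=1$ at the lower end, $V(+\infty)=0$ at the upper end), and the surviving integral, after the same substitution, is $\int_{-\infty}^\infty g(g^{-1}(X_i)+hu)K(u)\,\mathrm{d}u$. A second-order Taylor expansion of $g$ (valid by C4) together with $\int yK(y)\mathrm{d}y=0$ gives $X_i+\tfrac{h^2}{2}g''(g^{-1}(X_i))\int y^2K(y)\mathrm{d}y+o(h^2)$. Summing, multiplying by $h$, and dividing by $nh$ then yields $\bar X-\omega'+\tfrac{h^2}{2}\big(\int y^2K(y)\mathrm{d}y\big)\tfrac1n\sum_{i=1}^n g''(g^{-1}(X_i))+o(h^2)$; since the sample average converges in probability to the finite constant $E[g''(g^{-1}(X))]$, the correction is $O_p(h^2)$, establishing $\widetilde{m}_{X,1}(\omega')+\omega'=\bar X+O_p(h^2)$.

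The main obstacle I anticipate is justifying that the upper boundary term $g(z)V\big(\tfrac{z-g^{-1}(X_i)}{h}\big)$ vanishes as $z\to+\infty$ when $\omega''=+\infty$: there $g(z)\to\infty$ while $V\to0$, an indeterminate product whose control requires pitting the growth of $g$ against the tail decay of $K$. This is where I expect conditions C5 and C6 to do the real work, and the interchange of the limit $t\to\omega'$ with the integration by parts must be checked under the same bounds.
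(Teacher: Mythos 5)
Your proposal is correct and follows essentially the same route as the paper: pass to the limit $t\to\omega'^+$ in the numerator and denominator separately, use $V(-\infty)=1$ to obtain the exact identity for $\widetilde{m}_{X,2}$, and integrate by parts then Taylor-expand $g(g^{-1}(X_i)+hu)$ against $K$ for $\widetilde{m}_{X,1}$. The one point you flag --- controlling the indeterminate product $g(z)V\bigl(\tfrac{z-g^{-1}(X_i)}{h}\bigr)$ as $z\to\infty$ when $\omega''=\infty$ --- is a genuine subtlety that the paper's own proof passes over silently, so your extra care there is a strength rather than a deviation.
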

\begin{remark}\label{rem:preservea}
	Please note that, although for convenience it is written as $\widetilde{m}_{X,i}(\omega')$ (or $\widetilde{m}_{X,i}(\omega'')$), but we actually mean it as $\lim_{t\rightarrow\omega'^+}\widetilde{m}_{X,i}(t)$ (or $\lim_{t\rightarrow\omega''^-}\widetilde{m}_{X,i}(t)$), since $g^{-1}(\omega')$ (or $g^{-1}(\omega'')$) might be undefined.
\end{remark}
\begin{remark}\label{rem:preserveb}
	From equation (\ref{eq:unbiased}), we can say that $\widetilde{m}_{X,2}(\omega')$ is unbiased, because
	\begin{gather*}
	E[\widetilde{m}_{X,2}(\omega')]=E(X)-\omega'=m_X(\omega').
	\end{gather*}
	In other words, its bias is exactly $0$. On the other hand, even though $\widetilde{m}_{X,1}(\omega')$ is not exactly the same as $\bar{X}-\omega'$, we can at least say they are close enough, and the rate of $h^2$ error is relatively small. However, from this we may take a conclusion that $\widetilde{m}_{X,2}(t)$ is superior than $\widetilde{m}_{X,1}(t)$ in the aspect of preserving behaviour of the MRL function near the boundary.
\end{remark}

\section{Numerical studies}
\label{sec:num}
In this section, we show the results of our numerical studies. The studies are divided into two parts, the simulations and the real data analysis.

\subsection{simulation results}
\label{sec:sim}
In this study, we calculated the average integrated squared error (AISE) and the average squared error (ASE) with several sample sizes, and repeated them $10000$ times for each case. We compared four estimators: empirical $m_n(t)$; naive kernel $\widehat{m}_X(t)$; and our two proposed estimators $\widetilde{m}_{X,1}(t)$ and $\widetilde{m}_{X,2}(t)$. The distributions which we generated are standard uniform $U(0,1)$, beta $Beta(3,2)$, gamma $Gamma(2,3)$, Weibull $Weibull(3,2)$, and absolute-normal $abs.N(0,1)$ distributions. For $U(0,1)$ and $Beta(3,2)$, we took $g=\Phi$, the standard normal distribution function; and we chose $g^{-1}=\log$ for the rests. The kernel function we used here is the Epanechnikov Kernel and the bandwidths were chosen by cross-validation technique. We actually did the same simulation study using Gaussian Kernel, but the results are quite similar. That being the case, we do not show those results in this article.

(Table~\ref{tab:aise}) compares the AISE in order to illustrate the general measure of error among the estimators. (Table~\ref{tab:ase1}) compares the ASE of each estimate when $t=0.001$, as a representation of the error when $t$ is in the boundary region. For (Table~\ref{tab:ase2}), the ASE at $t=E(X)$ represents the error when the point of evaluation is moderate. The last table represents the error of the estimators when $t$ is large enough.

As we can see in the tables, our proposed estimators gave the best results for all cases. This is particularly true for our second proposed estimator in most cases. Though our first proposed estimator's performances are not as good as the second one, it is still fairly comparable because the differences are not huge. Furthermore, the first proposed estimator is better than the empirical and the naive kernel estimators in most cases.

We may take interest in (Table~\ref{tab:ase1}) as the empirical MRL function gave similar results as our second proposed estimator did. However, this is reasonable due to the fact that $m_n(0)=\bar{X}$, same as $\widetilde{m}_{X,2}(0)$ according to Theorem~\ref{thm:preserve}. In (Table~\ref{tab:ase2}), even though our second estimator still outperformed the others, the margins of difference with the other estimators are not big. This can be explained as $t=E(X)$ has high density, neither it has boundary problems nor lack of data as in the tail. However, (Table~\ref{tab:ase3}) showed another story. As the tail of distribution has lesser density of data, the empirical and naive kernel estimators dropped to $0$ quickly. This explains why the ASE of theirs are much larger than the ASE of both of the proposed estimators.

\begin{table}
	\centering
	\caption{Average integrated squared error comparison}
	\label{tab:aise}
	\begin{tabular}{lllll}
		\hline\noalign{\smallskip}
		Distributions&Empirical&Naive&Proposed 1&Proposed 2\\
		\noalign{\smallskip}\hline\noalign{\smallskip}
		$U(0,1)$&$12.62201$&$12.92260$&$6.43667$&$\mathbf{6.06429}$\\
		$Beta(3,2)$&$31.49324$&$35.09813$&$14.27382$&$\mathbf{10.09131}$\\
		$Gamma(2,3)$&$58.86433$&$66.06180$&$29.01258$&$\mathbf{24.51671}$\\
		$Weibull(3,2)$&$0.14887$&$0.27073$&$0.11483$&$\mathbf{0.04073}$\\
		$abs.N(0,1)$&$0.13255$&$0.09379$&$0.07700$&$\mathbf{0.04943}$\\
		\noalign{\smallskip}\hline
	\end{tabular}
\end{table}

\begin{table}
	\centering
	\caption{Average squared error comparison when $t=0.001$}
	\label{tab:ase1}
	\begin{tabular}{lllll}
		\hline\noalign{\smallskip}
		Distributions&Empirical&Naive&Proposed 1&Proposed 2\\
		\noalign{\smallskip}\hline\noalign{\smallskip}
		$U(0,1)$&$\mathbf{0.08082}$&$0.27639$&$0.08728$&$\mathbf{0.08082}$\\
		$Beta(3,2)$&$0.23834$&$0.77734$&$0.28031$&$0.23833$\\
		$Gamma(2,3)$&$\mathbf{0.35739}$&$0.82174$&$0.37886$&$\mathbf{0.35739}$\\
		$Weibull(3,2)$&$\mathbf{0.00741}$&$0.05175$&$0.00958$&$\mathbf{0.00741}$\\
		$abs.N(0,1)$&$\mathbf{0.00728}$&$0.07433$&$0.00803$&$\mathbf{0.00728}$\\
		\noalign{\smallskip}\hline
	\end{tabular}
\end{table}

\begin{table}
	\centering
	\caption{Average squared error comparison when $t=E(X)$}
	\label{tab:ase2}
	\begin{tabular}{lllll}
		\hline\noalign{\smallskip}
		Distributions&Empirical&Naive&Proposed 1&Proposed 2\\
		\noalign{\smallskip}\hline\noalign{\smallskip}
		$U(0,1)$&$0.20656$&$0.26846$&$0.20237$&$\mathbf{0.18102}$\\
		$Beta(3,2)$&$0.42396$&$0.61622$&$0.42069$&$\mathbf{0.31329}$\\
		$Gamma(2,3)$&$0.76834$&$1.01913$&$0.73909$&$\mathbf{0.56705}$\\
		$Weibull(3,2)$&$0.00806$&$0.02740$&$0.03183$&$\mathbf{0.00665}$\\
		$abs.N(0,1)$&$0.01176$&$0.02295$&$0.01127$&$\mathbf{0.00566}$\\
		\noalign{\smallskip}\hline
	\end{tabular}
\end{table}

\begin{table}
	\centering
	\caption{Average squared error comparison when $t=E(X)+3\sigma$}
	\label{tab:ase3}
	\begin{tabular}{lllll}
		\hline\noalign{\smallskip}
		Distributions&Empirical&Naive&Proposed 1&Proposed 2\\
		\noalign{\smallskip}\hline\noalign{\smallskip}
		$U(0,1)$&$3.09158$&$3.06731$&$\mathbf{1.33397}$&$1.34459$\\
		$Beta(3,2)$&$5.89501$&$6.03789$&$1.85105$&$\mathbf{1.51738}$\\
		$Gamma(2,3)$&$10.61608$&$11.18569$&$4.26919$&$\mathbf{4.00378}$\\
		$Weibull(3,2)$&$0.13199$&$0.12829$&$0.06907$&$\mathbf{0.01307}$\\
		$abs.N(0,1)$&$0.08860$&$0.08057$&$0.03929$&$\mathbf{0.02293}$\\
		\noalign{\smallskip}\hline
	\end{tabular}
\end{table}

As further illustrations, we also provide some graphs to compare our proposed estimators' performances with the other estimators. (Figure~\ref{fig:mrl}) is about the graphs comparison of the empirical, the naive kernel, and our two proposed estimators. By (Figure~\ref{fig:bias}), we compare the point-wise simulated bias of the same estimators. From those, we can say that our proposed estimators outperformed the empirical and the naive kernel estimators.

There are three things that we want to emphasize from these figures. First, instead of resembling the theoretical shape, the graphs of the naive kernel estimator are more like a smoothed version of the graphs of the empirical estimator, especially in (Figure~\ref{fig:mrlexp}) and (Figure~\ref{fig:mrlgam}). This is somewhat interesting, as even though lack of smoothness, empirical type estimators (e.g. empirical distribution function) usually quite resemble the shape of the theoretical ones. However, in this MRL function case, the empirical MRL function cannot be used as a reference, because its shape is too unstable and different to the theoretical shape (see (Figure~\ref{fig:mrlexp}), (Figure~\ref{fig:mrlgam}), and (Figure~\ref{fig:mrlabs})). Same goes for the naive kernel MRL function estimator. Even though (Figure~\ref{fig:mrlwei}) and (Figure~\ref{fig:mrlabs}) showed the naive kernel estimator has nice graphs, it performed fairly poorly in (Figure~\ref{fig:mrlexp}) and (Figure~\ref{fig:mrlgam}). On the other hand, the graphs of our proposed estimators resemble the theoretical ones. The difference is quite striking in (Figure~\ref{fig:mrlexp}), where the empirical and naive kernel estimators are jumpy, but the proposed estimators gave stable and almost straight-line graphs.

The second thing we want to emphasize is, from all figures we can see that the boundary bias problems affect naive kernel estimator severely, as (Figure~\ref{fig:bias}) shows the simulated bias values of $\widehat{m}_X(t)$ near $t=0$ are the farthest from $0$. We can also conclude that the empirical MRL function does not suffer from the boundary bias problems, as its bias is almost $0$ near $t=0$. However, as $t$ goes larger, the bias drops to negative value quickly, especially in (Figure~\ref{fig:biasexp}). In contrast, our estimators, especially the second one, gave almost straight line at $0$-ordinate in (Figure~\ref{fig:biasabs}), which means its simulated bias is almost always $0$. And at last, we can conclude that though all of the graphs of the estimators presented here will fade to $0$ when $t$ is large enough, our proposed estimators are more stable and fading to $0$ much slower than the other two estimators.

\begin{figure}
	\centering
	\subfigure[$X\sim exp\left(\frac{1}{2}\right)$\label{fig:mrlexp}]{
		\resizebox*{5.85cm}{!}{\includegraphics{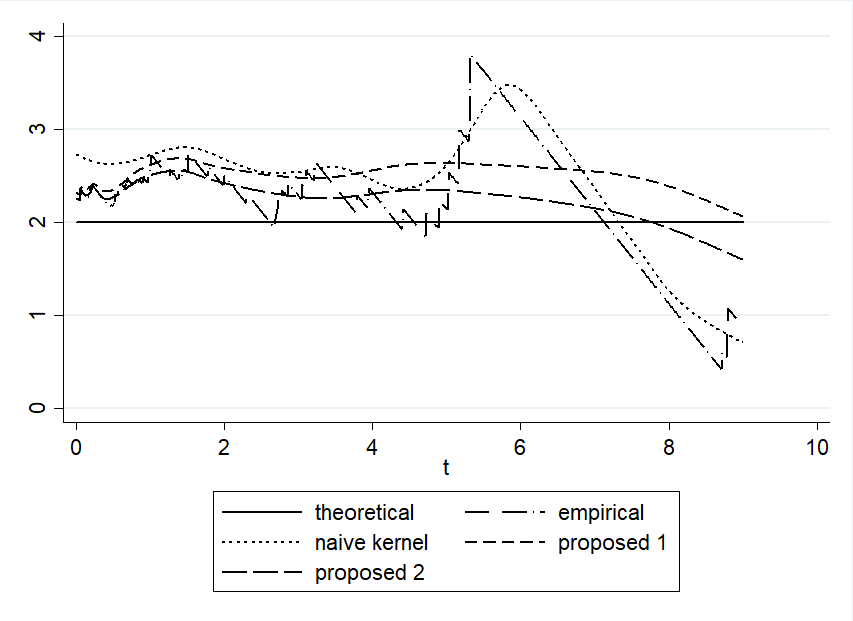}}}
	\subfigure[$X\sim Gamma(2,3)$\label{fig:mrlgam}]{
		\resizebox*{5.85cm}{!}{\includegraphics{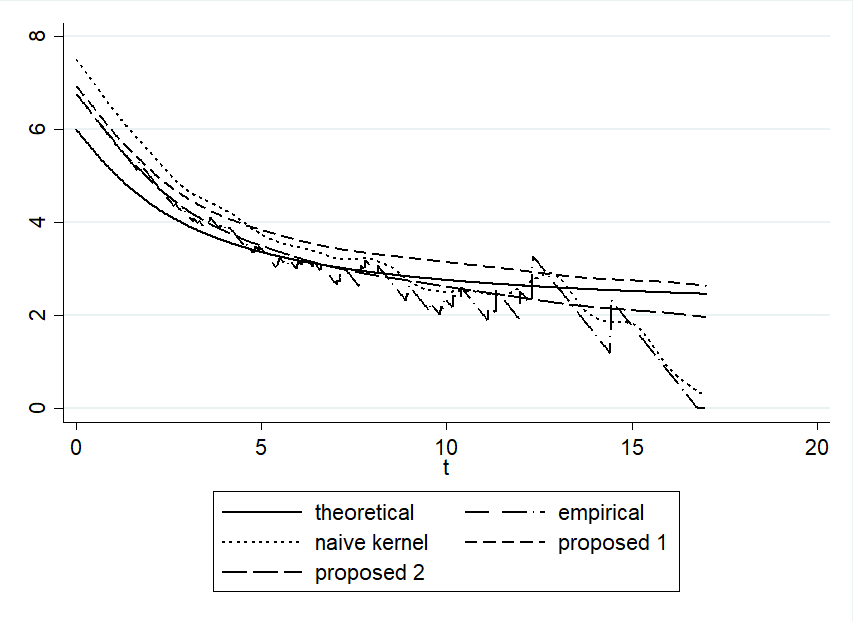}}}\\
	\subfigure[$X\sim Weibull(3,2)$\label{fig:mrlwei}]{
		\resizebox*{5.85cm}{!}{\includegraphics{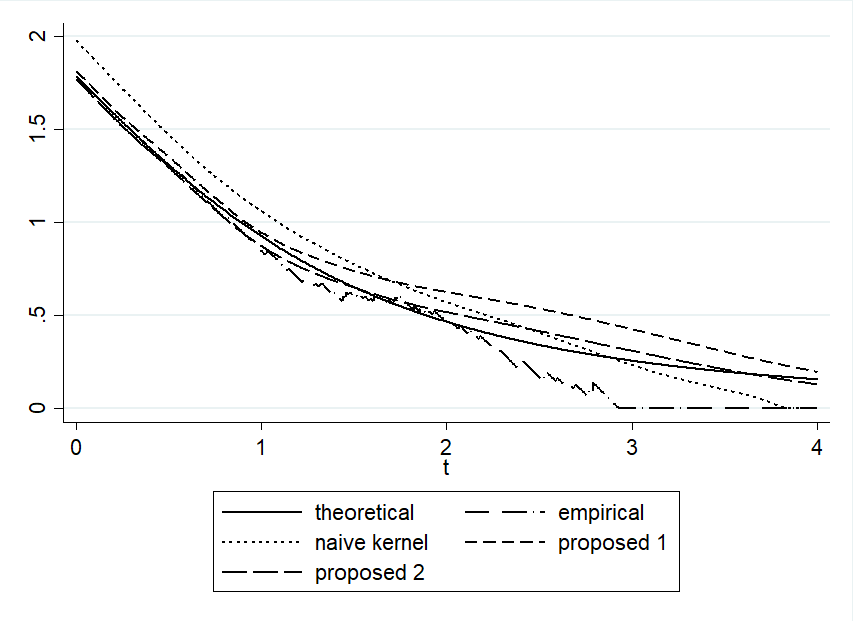}}}
	\subfigure[$X\sim abs.N(0,1)$\label{fig:mrlabs}]{
		\resizebox*{5.85cm}{!}{\includegraphics{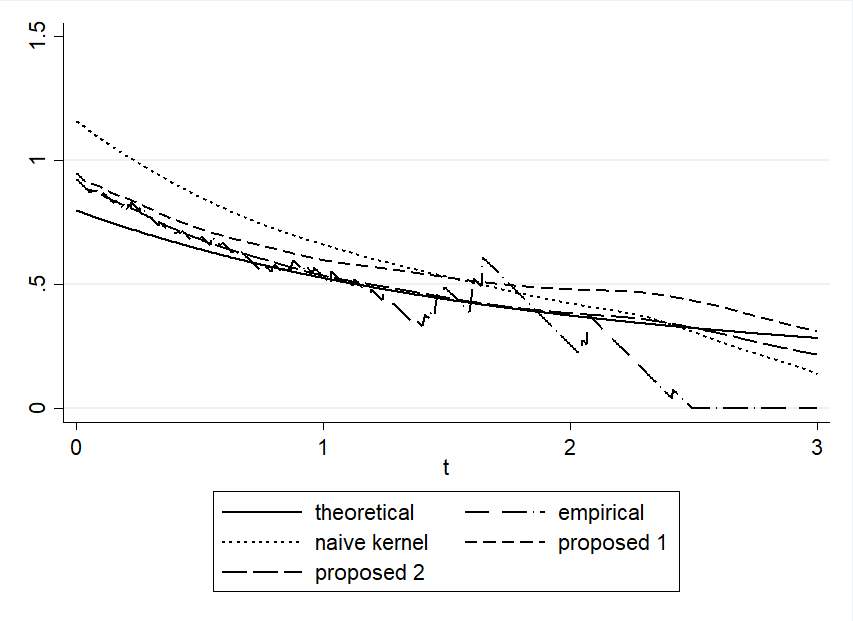}}}
	\caption{Graphs comparisons of $m_X(t)$, $m_n(t)$, $\widehat{m}_X(t)$, $\widetilde{m}_{X,1}(t)$, and $\widetilde{m}_{X,2}(t)$ for several distributions, with sample size $n=50$.}\label{fig:mrl}
\end{figure}

\begin{figure}
	\centering
	\subfigure[$X\sim exp\left(\frac{1}{2}\right)$\label{fig:biasexp}]{
		\resizebox*{5.85cm}{!}{\includegraphics{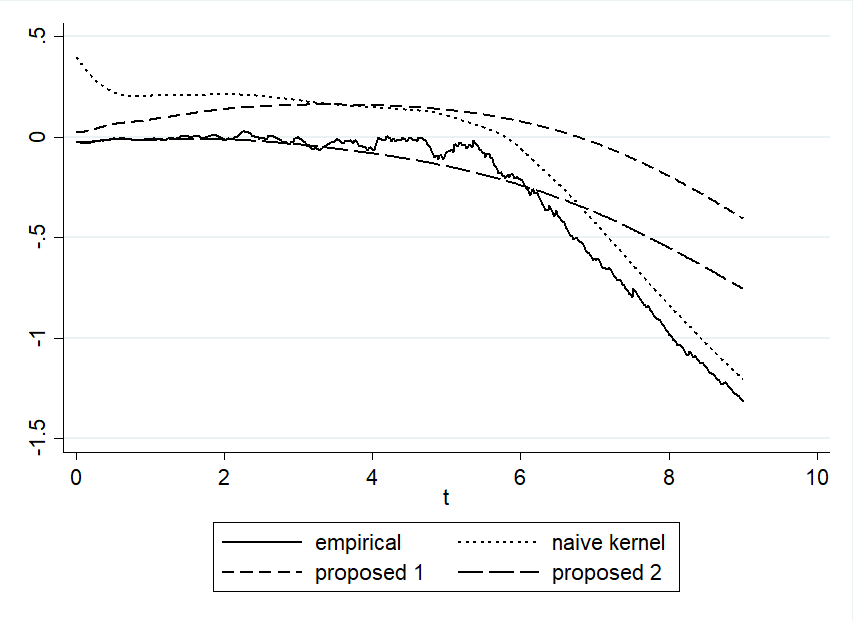}}}
	\subfigure[$X\sim abs.N(0,1)$\label{fig:biasabs}]{
		\resizebox*{5.85cm}{!}{\includegraphics{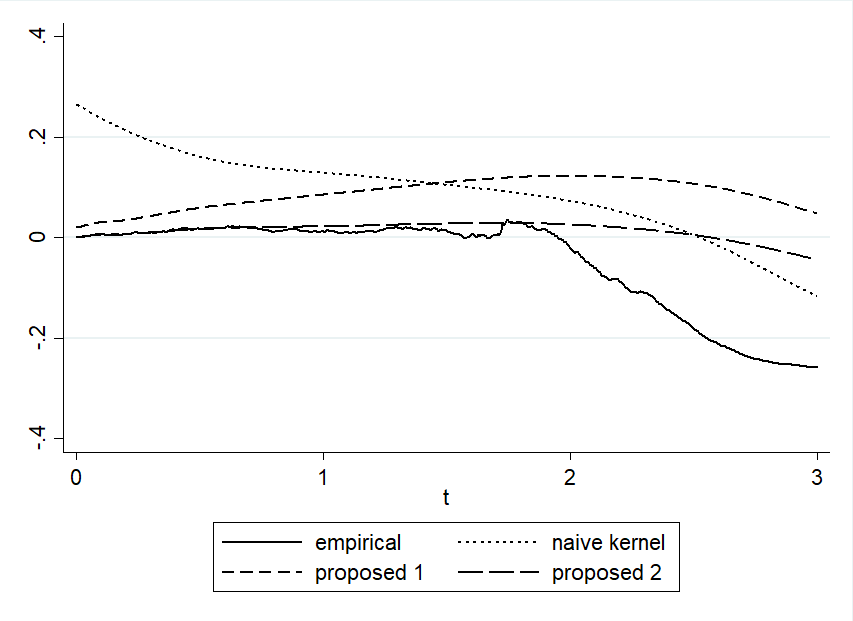}}}
	\caption{Simulated bias comparisons of $m_n(t)$, $\widehat{m}_X(t)$, $\widetilde{m}_{X,1}(t)$, and $\widetilde{m}_{X,2}(t)$ for several distributions, with sample size $n=50$ and $500$ repetitions.}\label{fig:bias}
\end{figure}

\subsection{real data analysis}
\label{sec:real}
In this analysis, we used the UIS Drug Treatment Study Data from \cite{HL98} to show the performances our proposed methods for real data. The data set records the result of an experiment about how long someone who got drug treatment to relapse (reuse) the drug again. The variable we used in the calculation is the "time" variable, which represents the number of days after the admission to drug treatment until drug relapse.

\begin{figure}
	\centering
	\includegraphics[width=0.75\textwidth]{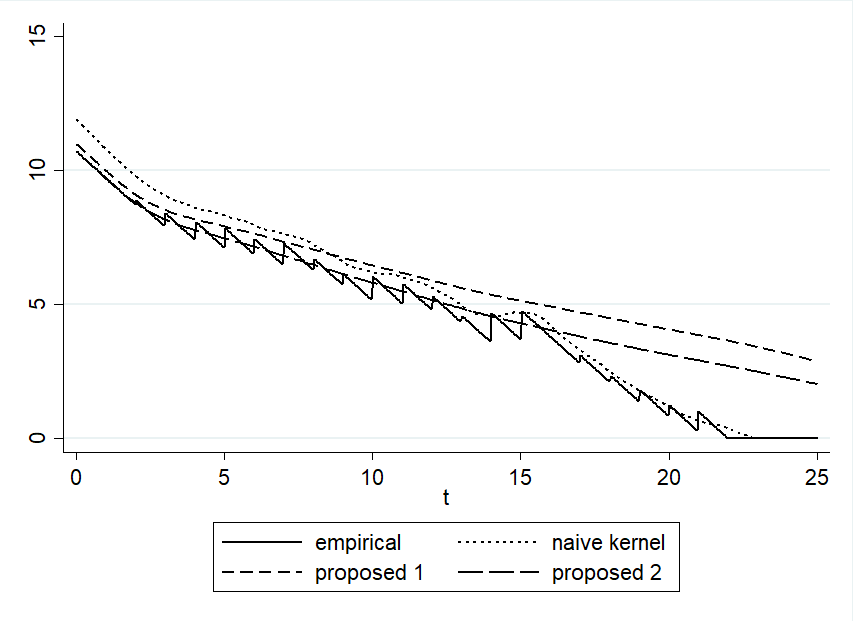}
	\caption{Comparison of $m_n(t)$, $\widehat{m}_X(t)$, $\widetilde{m}_{X,1}(t)$, and $\widetilde{m}_{X,2}(t)$ for UIS data}
	\label{fig:uis}
\end{figure}

(Figure~\ref{fig:uis}) shows that, once again, the naive kernel estimator is just a smoothed version of the empirical MRL function. Furthermore, soon after $m_n(t)$ touches $0$, $\widehat{m}_X(t)$ also reaches $0$. Conversely, though our proposed estimators are decreasing as well, but they are much slower than the other two.

\section{Conclusion}
\label{sec:conclude}
This article has proposed two new estimators for the mean residual life function (also the survival and the cumulative survival functions) when the data are supported not on the entire real line. First we constructed two new estimators for both the survival function and the cumulative survival function using bijective transformation. After deriving the formulas for the biases and the variances of the estimators, we defined two estimators for the MRL function. The properties of our proposed methods have been discovered and discussed. Moreover, the results of the numerical studies reveal the superior performance of the proposed estimators. For future research, establishing new estimators using a similar idea for other functions such as the distribution function, the hazard rate function, or regression of the MRL function will be a valuable contribution to this field.

%\bibliographystyle{unsrt}
%\bibliography{references}  %%% Remove comment to use the external .bib file (using bibtex).
%%% and comment out the ``thebibliography'' section.

%%% Comment out this section when you \bibliography{references} is enabled.

\appendix

\section{Some lemmas needed to prove the theorems}
Though sometimes not stated explicitly in the proofs of our theorems, the following lemmas are needed for the calculations.
\begin{lemma}\label{lemma:integral}
	Under the condition C1, the following equations hold
	\begin{gather}
	\int_{-\infty}^\infty V(x)K(x)\mathrm{d}x=\frac{1}{2},\\
	\int_{-\infty}^\infty xV(x)K(x)\mathrm{d}x=-\frac{1}{2}\int_{-\infty}^\infty V(x)W(x)\mathrm{d}x,\\
	\int_{-\infty}^\infty \mathbb{V}(x)K(x)\mathrm{d}x=\int_{-\infty}^\infty V(x)W(x)\mathrm{d}x.
	\end{gather}
\end{lemma}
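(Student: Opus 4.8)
The plan is to reduce all three identities to elementary integration by parts, exploiting the differential relations $W'(x)=K(x)$, $V'(x)=-K(x)$, $\mathbb{V}'(x)=-V(x)$, the normalization $W(x)+V(x)=1$, and the symmetry of $K$ (which forces $\int_{-\infty}^\infty xK(x)\,\mathrm{d}x=0$ and $V(-x)=W(x)$). The only genuinely delicate point is the vanishing of the boundary terms at $\pm\infty$; everything else is bookkeeping.

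First I would establish the first identity by observing that $V(x)K(x)=-V(x)V'(x)=-\tfrac12\frac{\mathrm{d}}{\mathrm{d}x}[V(x)]^2$, so the integral telescopes to $-\tfrac12\big([V(\infty)]^2-[V(-\infty)]^2\big)=-\tfrac12(0-1)=\tfrac12$. For the second identity, the naive choice $VK=-\tfrac12(V^2)'$ fails because the boundary term $x[V(x)]^2$ diverges as $x\to-\infty$. Instead I would symmetrize using $(VW)'=V'W+VW'=K(V-W)=K(2V-1)$, hence $VK=\tfrac12(VW)'+\tfrac12 K$. Multiplying by $x$ and integrating, the term $\tfrac12\int xK$ vanishes by oddness, and integrating $\tfrac12\int x(VW)'\,\mathrm{d}x$ by parts gives $-\tfrac12\int VW\,\mathrm{d}x$ once the boundary term $[xVW]_{-\infty}^\infty$ is shown to vanish. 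This is the crux: $VW\to0$ at both ends, but one must check $xVW\to0$, which holds because finiteness of $\int VW$ (equivalently finiteness of the kernel's first absolute moment) forces $V$ and $W$ to decay faster than $|x|^{-1}$ in the respective tails.

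For the third identity I would integrate by parts directly with $u=\mathbb{V}$ and $\mathrm{d}v=K\,\mathrm{d}x=W'\,\mathrm{d}x$, yielding $\int \mathbb{V}K=[\mathbb{V}W]_{-\infty}^\infty-\int\mathbb{V}'W=[\mathbb{V}W]_{-\infty}^\infty+\int VW$. The boundary term vanishes, since $\mathbb{V}\to0$ while $W\to1$ at $+\infty$, and at $-\infty$ the factor $W$ tends to $0$ faster than $\mathbb{V}$ grows (here $\mathbb{V}(x)$ is asymptotically linear in $|x|$), leaving exactly $\int VW$. The main obstacle throughout is thus uniformly the justification of the vanishing boundary contributions, which I would handle once via the finite-moment hypotheses implicit in C1 and C6 rather than repeating it for each case.
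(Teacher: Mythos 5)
Your proof is correct and follows essentially the same route as the paper, which simply asserts that all three identities follow from integration by parts together with the definitions of $V$, $W$, and $\mathbb{V}$; you have merely written out that sketch in full, including the decomposition $(VW)'=K(2V-1)$ for the second identity and the verification that the boundary terms $xV(x)W(x)$ and $\mathbb{V}(x)W(x)$ vanish. Your added care about the tail behaviour (which requires a finite first absolute moment of $K$, not literally stated in C1 but needed anyway for the right-hand sides to be finite) is a reasonable and correct supplement to the paper's one-line argument.
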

\begin{proof}
	All of the above equations can be proven using the integration by parts and the definitions of $V(x)$, $\mathbb{V}(x)$, and $W(x)$.
\end{proof}
\begin{lemma}\label{lemma:cv}
	Let $f_Y(t)$ and $S_Y(t)$ be the probability density function and the survival function of $Y=g^{-1}(X)$, and let $a(t)=\int_t^\infty g'(y)S_Y(y)\mathrm{d}y$ and $A(t)=\int_t^\infty g'(y)a(y)\mathrm{d}y$. Then, under the condition C6, we have for $t\in\Omega$,
	\begin{gather}
	f_Y(g^{-1}(t))=g'(g^{-1}(t))f_X(t),\label{eq:a4}\\
	S_Y(g^{-1}(t))=S_X(t),\label{eq:a5}\\
	a(g^{-1}(t))=\mathbb{S}_X(t),\label{eq:a6}\\
	A(g^{-1}(t))=\bar{\mathbb{S}}_X(t).\label{eq:a7}
	\end{gather}
\end{lemma}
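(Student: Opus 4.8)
The plan is to treat all four identities as consequences of the elementary change of variables induced by the strictly increasing bijection $g$, proving them in the order (\ref{eq:a5}), (\ref{eq:a4}), (\ref{eq:a6}), (\ref{eq:a7}), since each later identity feeds on the previous one. The starting point is that, because $g$ is continuous and strictly increasing by C3, the event $\{Y\le y\}$ coincides with $\{X\le g(y)\}$, so that $F_Y(y)=F_X(g(y))$ and hence $S_Y(y)=S_X(g(y))$ for every $y\in\mathbb{R}$. Setting $y=g^{-1}(t)$ immediately gives (\ref{eq:a5}). Differentiating $F_Y(y)=F_X(g(y))$, with C4 guaranteeing the derivatives exist, yields $f_Y(y)=f_X(g(y))g'(y)$, and evaluating at $y=g^{-1}(t)$ produces (\ref{eq:a4}).

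For (\ref{eq:a6}) I would substitute $S_Y(y)=S_X(g(y))$ into the definition $a(g^{-1}(t))=\int_{g^{-1}(t)}^\infty g'(y)S_Y(y)\,\mathrm{d}y$ and then apply the change of variable $x=g(y)$, $\mathrm{d}x=g'(y)\,\mathrm{d}y$. The factor $g'(y)$ in the integrand is exactly what the Jacobian absorbs, collapsing the integral to $\int_t^\infty S_X(x)\,\mathrm{d}x=\mathbb{S}_X(t)$; here I use that $g$ maps $\mathbb{R}$ onto $\Omega$ with $g(y)\to\omega''$ as $y\to\infty$, together with $S_X\equiv 0$ beyond $\omega''$, to fix the upper limit. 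Identity (\ref{eq:a7}) is obtained identically: rewriting (\ref{eq:a6}) as $a(y)=\mathbb{S}_X(g(y))$ and inserting it into $A(g^{-1}(t))=\int_{g^{-1}(t)}^\infty g'(y)a(y)\,\mathrm{d}y$, the same substitution turns it into $\int_t^\infty\mathbb{S}_X(x)\,\mathrm{d}x=\bar{\mathbb{S}}_X(t)$.

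The calculations are routine, so the only real care is in justifying the improper integrals and their manipulation. This is precisely where condition C6 enters: finiteness of $E(X)$ makes $\mathbb{S}_X(t)=\int_t^\infty S_X(x)\,\mathrm{d}x$ well defined, while finiteness of $E(X^2)$ (via Fubini applied to $\bar{\mathbb{S}}_X$) guarantees convergence of $\int_t^\infty\mathbb{S}_X(x)\,\mathrm{d}x$; without these the functions $a$ and $A$ need not be finite and the substitution would be meaningless. I expect the main obstacle, though a mild one, to be the bookkeeping of the integration limits under the substitution when $\Omega$ is bounded or half-bounded, i.e. confirming that $g(y)\to\omega''$ as $y\to\infty$ and that the tail contributions vanish, so that the transformed upper limit is genuinely $\omega''$ (equivalently $\infty$ after extending $S_X$ and $\mathbb{S}_X$ by zero). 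Once that is settled, the four identities follow without further difficulty.
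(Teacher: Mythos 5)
Your proposal is correct and follows essentially the same route as the paper: equation (\ref{eq:a5}) via the monotonicity of $g$, equation (\ref{eq:a4}) by differentiation (the paper states it directly as the standard change-of-variable density formula), and equations (\ref{eq:a6})--(\ref{eq:a7}) by the substitution $x=g(y)$. Your additional remarks on how C6 guarantees convergence of the improper integrals are a slightly more careful justification than the paper gives, but the argument is the same.
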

\begin{proof}
	Using the change-of-variable technique, it is obvious that $f_Y(t)=g'(t)f_X(g(t))$, and it validifies equation (\ref{eq:a4}). For equation (\ref{eq:a5}), by the definition,
	\begin{eqnarray*}
		S_Y(g^{-1}(t))=\Pr[Y>g^{-1}(t)]=\Pr[g(Y)>t]=\Pr(X>t)=S_X(t).
	\end{eqnarray*}
	Equation (\ref{eq:a6}) is easily done using integration by $x=g(y)$ substitution, which is
	\begin{eqnarray*}
		a(g^{-1}(t))=\int_{g^{-1}(t)}^\infty g'(y)S_Y(y)\mathrm{d}y=\int_t^{\omega''}S_Y(g^{-1}(x))\mathrm{d}x=\int_t^{\omega''}S_X(x)\mathrm{d}x=\mathbb{S}_X(t).
	\end{eqnarray*}
	The same fashion goes for equation (\ref{eq:a7}).
\end{proof}
\begin{remark}\label{rem:cv}
	The ideas to construct our proposed estimators actually came from Lemma~\ref{lemma:cv}. We intentionally constructed the estimators of $S_X(t)$ and $\mathbb{S}_X(t)$ using the relationships stated at equation (\ref{eq:a5}) and equation (\ref{eq:a6}), respectively. We refer to this lemma as the change-of-variable properties.
\end{remark}
\begin{lemma}\label{lemma:a(t)}
	Let $a(t)=\int_t^\infty g'(y)S_Y(y)\mathrm{d}y$ and
	\begin{gather}
	\widehat{a}(t)=\int_t^\infty g'(y)\widehat{S}_Y(y)\mathrm{d}y=\frac{1}{n}\sum_{i=1}^n\int_t^\infty g'(y)V\left(\frac{y-Y_i}{h}\right)\mathrm{d}y
	\end{gather}
	be the naive kernel estimator of $a(t)$. If $B\subset\mathbb{R}$ is an interval where both $\widehat{a}(t)$ and $a(t)$ are bounded, then $\sup_{t\in B}|\widehat{a}(t)-a(t)|\rightarrow_{a.s.}0$.
\end{lemma}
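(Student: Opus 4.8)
The plan is to reduce the statement to the strong uniform consistency of the kernel survival-function estimator, together with a tail bound supplied by the finiteness hypothesis. For $t\in B$ I would first write
\[
\widehat{a}(t)-a(t)=\int_t^\infty g'(y)[\widehat{S}_Y(y)-S_Y(y)]\,\mathrm{d}y ,
\]
where $\widehat{S}_Y(y)=\frac{1}{n}\sum_{i=1}^n V\left(\frac{y-Y_i}{h}\right)$ is the naive kernel estimator of $S_Y$. The structural observation to record at the outset is that, since $g'>0$ by C3 and $S_Y,\widehat{S}_Y\ge 0$, both $a$ and $\widehat{a}$ are nonnegative and nonincreasing in $t$; this monotonicity is what will let a single deterministic cut-off control the entire tail $\{t\ge M\}$ simultaneously.

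Fix $\varepsilon>0$ and a deterministic $M$, and split
\[
\widehat{a}(t)-a(t)=\int_t^M g'(y)[\widehat{S}_Y(y)-S_Y(y)]\,\mathrm{d}y+[\widehat{a}(M)-a(M)] .
\]
For $t$ in the window $B\cap(-\infty,M]$ the first term is at most $\|\widehat{S}_Y-S_Y\|_\infty\,(g(M)-\omega')$, and $g(M)-\omega'=\int_{-\infty}^M g'(y)\,\mathrm{d}y$ is finite in the boundary setting of interest, where $\omega'=\inf\Omega>-\infty$. Since $\widehat{S}_Y=1-\widehat{F}_Y$ with $\widehat{F}_Y$ the smoothed empirical distribution function, we have $\|\widehat{S}_Y-S_Y\|_\infty=\|\widehat{F}_Y-F_Y\|_\infty\rightarrow_{a.s.}0$, by the Glivenko--Cantelli theorem for the empirical c.d.f.\ and the vanishing of the smoothing bias as $h\rightarrow 0$ (cf.\ Nadaraya~\cite{nad64}); so for each fixed $M$ this bulk term is eventually below $\varepsilon$ almost surely.

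For the tail term I would use the hypothesis that $a$ is bounded on $B$: because its integrand is nonnegative, $a(M)=\int_M^\infty g'(y)S_Y(y)\,\mathrm{d}y$ can be made smaller than $\varepsilon$ by choosing $M$ large. Combined with monotonicity this simultaneously settles every $t\ge M$, since there $0\le a(t)\le a(M)$ and $0\le\widehat{a}(t)\le\widehat{a}(M)$, whence $|\widehat{a}(t)-a(t)|\le 2a(M)+|\widehat{a}(M)-a(M)|$; together with the window regime this covers all of $B$. It therefore remains only to prove pointwise strong consistency at the single point $M$, i.e.\ $\widehat{a}(M)\rightarrow_{a.s.}a(M)$. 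Noting that $\widehat{a}(M)=\widetilde{\mathbb{S}}_{X,1}(g(M))$ and $a(M)=\mathbb{S}_X(g(M))$ by Lemma~\ref{lemma:cv}, Theorem~\ref{thm:set1} supplies $E[\widehat{a}(M)]-a(M)=O(h^2)$ and $Var[\widehat{a}(M)]=O(n^{-1})$.

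The main obstacle is precisely this last upgrade from mean-square to almost-sure convergence at $M$: since $\sum_n n^{-1}$ diverges, Chebyshev plus a direct Borel--Cantelli does not apply. I would handle it in the standard way, passing to the subsequence $n_k=k^2$, for which $\sum_k Var[\widehat{a}_{n_k}(M)]=O(\sum_k k^{-2})<\infty$ gives a.s.\ convergence along $n_k$, and then controlling the indices between consecutive $n_k$ by a sandwiching/monotonicity argument, the usual device for strong consistency of kernel estimators. Assembling the three pieces (bulk $\rightarrow 0$ a.s., deterministic tail $<\varepsilon$, random tail $\rightarrow 0$ a.s.) and letting $\varepsilon\downarrow 0$ yields $\sup_{t\in B}|\widehat{a}(t)-a(t)|\rightarrow_{a.s.}0$.
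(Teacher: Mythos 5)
Your argument is correct in outline but follows a genuinely different route from the paper's. The paper gives a P\'olya-type argument: since $a$ is bounded, continuous and non-increasing on $B$, one picks finitely many points $t_1<\dots<t_k$ with $a(t_j)-a(t_{j+1})\le\varepsilon/2$, uses the monotonicity of \emph{both} $a$ and $\widehat{a}$ to squeeze $\sup_{t\in B}|\widehat{a}(t)-a(t)|$ by $\max_j|\widehat{a}(t_j)-a(t_j)|+\varepsilon$, and then invokes pointwise almost-sure convergence at the finitely many $t_j$. You instead decompose the integral into a bulk part on $[t,M]$, controlled by Nadaraya's uniform strong consistency of $\widehat{S}_Y$ times the total mass $g(M)-g(t)$, and a tail part handled by monotonicity plus consistency at the single point $M$. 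Each approach has a price: yours imports the uniform consistency of the kernel survival estimator (which the paper only uses later, in Appendix F) and needs $\int_{\inf B}^{M}g'(y)\,\mathrm{d}y<\infty$, i.e.\ effectively $\omega'>-\infty$, which you acknowledge but which makes your proof slightly narrower than the lemma as stated ($-\infty\le\omega'$ is allowed); the paper's partition argument avoids both issues and is more self-contained. Both proofs ultimately rest on the same unproved kernel-estimation fact --- pointwise almost-sure convergence of $\widehat{a}$ at a fixed point --- which the paper simply declares ``clear''; you are more honest in isolating it as the main obstacle, but your proposed fix is not quite as routine as you suggest: the interpolation between consecutive indices $n_k\le n<n_{k+1}$ does not reduce to the usual sandwich for partial sums of nonnegative terms, because the bandwidth $h_n$ sits inside every summand and $\psi_h(y)=\int_M^\infty g'(z)V\left(\frac{z-y}{h}\right)\mathrm{d}z$ is not monotone in $h$; an exponential (Hoeffding/Bernstein-type) bound giving summable deviation probabilities along the full sequence would close this more cleanly.
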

\begin{proof}
	Since $\widehat{a}(t)$ and $a(t)$ are both bounded, non-increasing, and continuous on $B$, then for any $\varepsilon>0$, we can find $k$ number of points on $A$ such that
	\begin{gather*}
	-\infty\leq\inf A=t_1<t_2<...<t_k=\sup A\leq\infty,
	\end{gather*}
	and $a(t_j)-a(t_{j+1})\leq\varepsilon/2$, $j=1,2,...,k-1$. For any $t\in B$, it is clear that there exists $j$ such that $t_j\leq t<t_{j+1}$. For that particular $j$, we have
	\begin{gather*}
	\widehat{a}(t_j)\geq\widehat{a}(t)\geq\widehat{a}(t_{j+1}) \; \; \; \mathrm{and} \; \; \; a(t_j)\geq a(t)\geq a(t_{j+1}),
	\end{gather*}
	which result in
	\begin{gather*}
	\widehat{a}(t_{j+1})-a(t_{j+1})-\frac{\varepsilon}{2}\leq\widehat{a}(t)-a(t)\leq\widehat{a}(t_j)-a(t_j)+\frac{\varepsilon}{2}.
	\end{gather*}
	Therefore,
	\begin{gather*}
	\sup_{t\in B}|\widehat{a}(t)-a(t)|\leq\sup_j|\widehat{a}(t_j)-a(t_j)|+\varepsilon.
	\end{gather*}
	Now, because $\widehat{a}(t)$ is a naive kernel estimator, it is clear that for fix $t_0$, $\widehat{a}(t_0)$ converges almost surely to $a(t_0)$. Thus, we get $|\widehat{a}(t_0)-a(t_0)|\rightarrow_{a.s.}0$. Hence, for any $\varepsilon>0$, almost surely $\sup_{t\in B}|\widehat{a}(t)-a(t)|\leq\varepsilon$ when $n\rightarrow\infty$, which concludes the proof.
\end{proof}

\section{Proof of theorem~\ref{thm:set1}}
Utilizing the usual reasoning of \textit{i.i.d.} random variables and the transformation property of expectation, and with the fact
\begin{gather*}
V_{1,h}(x,y)=\frac{1}{h}\int_{-\infty}^y K\left(\frac{x-z}{h}\right)\mathrm{d}z,
\end{gather*}
we have
\begin{eqnarray*}
	E[\widetilde{S}_{X,1}(t)]&=&E[V_{1,h}(g^{-1}(t),g^{-1}(X_1))]\\
	&=&\int_{-\infty}^\infty V_{1,h}(g^{-1}(t),y)f_Y(y)\mathrm{d}y\\
	&=&\frac{1}{h}\int_{-\infty}^\infty K\left(\frac{g^{-1}(t)-y}{h}\right)S_Y(y)\mathrm{d}y\\
	&=&\int_{-\infty}^\infty S_Y(g^{-1}(t)-hu)K(u)\mathrm{d}u\\
	&=&\int_{-\infty}^\infty\left[S_Y(g^{-1}(t))+hu f_Y(g^{-1}(t))-\frac{h^2}{2}u^2 f_Y'(g^{-1}(t))+o(h^2)\right]K(u)\mathrm{d}u\\
	&=&S_X(t)-\frac{h^2}{2}b_1(t)\int_{-\infty}^\infty u^2 K(u)\mathrm{d}u+o(h^2),
\end{eqnarray*}
and we have $Bias[\widetilde{S}_{X,1}(t)]$. For the variance of $\widetilde{S}_{X,1}(t)$, we first calculate
\begin{eqnarray*}
	E[V_{1,h}^2(g^{-1}(t),g^{-1}(X_1))]&=&\int_{-\infty}^\infty V_{1,h}^2(g^{-1}(t),y)f_Y(y)\mathrm{d}y\\
	&=&\frac{2}{h}\int_{-\infty}^\infty V_{1,h}(g^{-1}(t),y)K\left(\frac{g^{-1}(t)-y}{h}\right)S_Y(y)\mathrm{d}y\\
	&=&2\int_{-\infty}^\infty[S_Y(g^{-1}(t))+huf_Y(g^{-1}(t))+o(h)]V(u)K(u)\mathrm{d}u\\
	&=&S_X(t)-hg'(g^{-1}(t))f_X(t)\int_{-\infty}^\infty V(u)W(u)\mathrm{d}u+o(h).
\end{eqnarray*}
Hence, the variance is
\begin{eqnarray*}
	Var[\widetilde{S}_{X,1}(t)]&=&\frac{1}{n}[E\{V_{1,h}^2(g^{-1}(t),g^{-1}(X_1))\}-E^2\{V_{1,h}(g^{-1}(t),g^{-1}(t))\}]\\
	&=&\frac{1}{n}S_X(t)F_X(t)-\frac{h}{n}g'(g^{-1}(t))f_X(t)\int_{-\infty}^\infty V(y)W(y)\mathrm{d}y+o\left(\frac{h}{n}\right).
\end{eqnarray*}

For the calculation of $Bias[\widetilde{\mathbb{S}}_{X,1}(t)]$, recall that
\begin{gather*}
\mathbb{V}_{1,h}(g^{-1}(t),y)=\int_t^{\omega''}V_{1,h}(g^{-1}(z),y)\mathrm{d}z,
\end{gather*}
and by assuming we can change the order of the integral signs, we get
\begin{eqnarray*}
	E[\widetilde{\mathbb{S}}_{X,1}(t)]&=&\int_{-\infty}^\infty\int_t^{\omega''}V_{1,h}(g^{-1}(z),y)\mathrm{d}zf_Y(y)\mathrm{d}y\\
	&=&\int_t^{\omega''}E[V_{1,h}(g^{-1}(z),Y)]\mathrm{d}z\\
	&=&\mathbb{S}_X(t)-\frac{h^2}{2}\int_t^{\omega''}b_1(z)\mathrm{d}z\int_{-\infty}^\infty y^2 K(y)\mathrm{d}y+o(h^2).
\end{eqnarray*}
It is easy to see $b_2(t)=-\int_t^{\omega''}b_1(z)\mathrm{d}z$, and then the formula of $Bias[\widetilde{\mathbb{S}}_{X,1}(t)]$ is done.

Before calculating $Var[\widetilde{\mathbb{S}}_{X,1}(t)]$, we must first note that
\begin{eqnarray*}
	\frac{\mathrm{d}}{\mathrm{d}y}\mathbb{V}_{1,h}(x,y)&=&\frac{1}{h}\int_x^\infty g'(z)K\left(\frac{z-y}{h}\right)\mathrm{d}z\\
	&=&\int_{\frac{x-y}{h}}^\infty g'(y+hz)K(z)\mathrm{d}z\\
	&=&g'(y)V\left(\frac{x-y}{h}\right)+hg''(y)\int_{\frac{x-y}{h}}^\infty zK(z)\mathrm{d}z+...\\
	&=&g'(y)V\left(\frac{x-y}{h}\right)+o(h)
\end{eqnarray*}
and
\begin{eqnarray*}
	\mathbb{V}_{1,h}(x,y)&=&h\int_{\frac{x-y}{h}}^\infty g'(y+hz)V(z)\mathrm{d}z\\
	&=&hg'(y)\mathbb{V}\left(\frac{x-y}{h}\right)+h^2 g''(y)\int_{\frac{x-y}{h}}^\infty zV(z)\mathrm{d}z+...\\
	&=&hg'(y)\mathbb{V}\left(\frac{x-y}{h}\right)+o(h).
\end{eqnarray*}
Now, we get
\begin{eqnarray*}
	E[\mathbb{V}_{1,h}^2(g^{-1}(t),g^{-1}(X_1))]&=&2\int_{-\infty}^\infty\mathbb{V}_{1,h}(g^{-1}(t),y)V\left(\frac{g^{-1}(t)-y}{h}\right)g'(y)S_Y(y)\mathrm{d}y+o(h)\\
	&=&2\int_{-\infty}^\infty\bigg[g'(y)V^2\left(\frac{g^{-1}(t)-y}{h}\right)\\
	&& \; \; \; \; \; \; \; +\frac{1}{h}\mathbb{V}_{1,h}(g^{-1}(t),y)K\left(\frac{g^{-1}(t)-y}{h}\right)\bigg]a(y)\mathrm{d}y+o(h).
\end{eqnarray*}
Conducting integration by parts once again for the first term, we have
\begin{eqnarray*}
	&&2\int_{-\infty}^\infty g'(y)V^2\left(\frac{g^{-1}(t)-y}{h}\right)a(y)\mathrm{d}y\\
	&& \; \; \; \; \; \; \; \; \; \; \; \; =\frac{4}{h}\int_{-\infty}^\infty V\left(\frac{g^{-1}(t)-y}{h}\right)K\left(\frac{g^{-1}(t)-y}{h}\right)A(y)\mathrm{d}y\\
	&& \; \; \; \; \; \; \; \; \; \; \; \; =4\int_{-\infty}^\infty A(g^{-1}(t)-hu)V(u)K(u)\mathrm{d}u\\
	&& \; \; \; \; \; \; \; \; \; \; \; \; =2\bar{\mathbb{S}}_X(t)-2hg'(g^{-1}(t))\mathbb{S}_X(t)\int_{-\infty}^\infty V(u)W(u)\mathrm{d}u+o(h).
\end{eqnarray*}
And the second term can be calculated with
\begin{eqnarray*}
	&&\frac{2}{h}\int_{-\infty}^\infty\mathbb{V}_{1,h}(g^{-1}(t),y)K\left(\frac{g^{-1}(t)-y}{h}\right)a(y)\mathrm{d}y\\
	&& \; \; \; \; \; \; \; \; \; \; \; \; =2\int_{-\infty}^\infty\left[g'(y)\mathbb{V}\left(\frac{g^{-1}(t)-y}{h}\right)+o(1)\right]K\left(\frac{g^{-1}(t)-y}{h}\right)a(y)\mathrm{d}y\\
	&& \; \; \; \; \; \; \; \; \; \; \; \; =2h\int_{-\infty}^\infty g'(g^{-1}(t)-hu)a(g^{-1}(t)-hu)\mathbb{V}(y)K(y)\mathrm{d}y+o(h)\\
	&& \; \; \; \; \; \; \; \; \; \; \; \; =2h\int_{-\infty}^\infty [g'(g^{-1}(t))+o(1)][a(g^{-1}(t))+o(1)]\mathbb{V}(y)K(y)\mathrm{d}y+o(h)\\
	&& \; \; \; \; \; \; \; \; \; \; \; \; =2hg'(g^{-1}(t))\mathbb{S}_X(t)\int_{-\infty}^\infty V(u)W(u)\mathrm{d}u+o(h).
\end{eqnarray*}
Thus, we get
\begin{gather*}
E[\mathbb{V}_{1,h}^2(g^{-1}(t),g^{-1}(X_1))]=2\bar{\mathbb{S}}_X(t)+o(h),
\end{gather*}
and this easily proves the $Var[\widetilde{\mathbb{S}}_{X,1}(t)]$ formula.

Before going into the calculation of the covariance, we have to take a look at
\begin{eqnarray*}
	&&E[\mathbb{V}_{1,h}(g^{-1}(t),g^{-1}(X_1))V_{1,h}(g^{-1}(t),g^{-1}(X_1))]\\
	&& \; \; \; \; \; \; \; \; \; \; \; \; \; =\int_{-\infty}^\infty \bigg[g'(y)V\left(\frac{g^{-1}(t)-y}{h}\right)V_{1,h}(g^{-1}(t),y)\\
	&& \; \; \; \; \; \; \; \; \; \; \; \; \; \; \; \; \; \; \; \; +\frac{1}{h}\mathbb{V}_{1,h}(g^{-1}(t),y)K\left(\frac{g^{-1}(t)-y}{h}\right)\bigg]S_Y(y)\mathrm{d}y.
\end{eqnarray*}
Once again we need to calculate them separately. The first term is
\begin{eqnarray*}
	&&\int_{-\infty}^\infty g'(y)V\left(\frac{g^{-1}(t)-y}{h}\right)V_{1,h}(g^{-1}(t),y)S_Y(y)\mathrm{d}y\\
	&& \; \; \; \; \; \; \; \; \; \; \; \; \; =\int_{-\infty}^\infty\bigg[\frac{1}{h^2}K\left(\frac{g^{-1}(t)-y}{h}\right)\int_{-\infty}^y K\left(\frac{g^{-1}(t)-z}{h}\right)\mathrm{d}z\\
	&& \; \; \; \; \; \; \; \; \; \; \; \; \; \; \; \; \; \; \; \; +\frac{1}{h}V\left(\frac{g^{-1}(t)-y}{h}\right)K\left(\frac{g^{-1}(t)-y}{h}\right)\bigg]a(y)\mathrm{d}y\\
	&& \; \; \; \; \; \; \; \; \; \; \; \; \; =\int_{-\infty}^\infty a(g^{-1}(t)-hu)\left[K(u)\int_u^\infty K(v)\mathrm{d}v+V(u)K(u)\right]\mathrm{d}u\\
	&& \; \; \; \; \; \; \; \; \; \; \; \; \; =\mathbb{S}_X(t)-hg'g^{-1}(t))S_X(t)\int_{-\infty}^\infty V(u)W(u)\mathrm{d}u+o(h),
\end{eqnarray*}
while the second term is
\begin{eqnarray*}
	&&\frac{1}{h}\int_{-\infty}^\infty\mathbb{V}_{1,h}(g^{-1}(t),y)K\left(\frac{g^{-1}(t)-y}{h}\right)S_Y(y)\mathrm{d}y\\
	&& \; \; \; \; \; \; \; \; \; \; \; \; \; =h\int_{-\infty}^\infty g'(g^{-1}(t)-hu)S_Y(g^{-1}(t)-hu)\mathbb{V}(u)K(u)\mathrm{d}u+o(h)\\
	&& \; \; \; \; \; \; \; \; \; \; \; \; \; =h\int_{-\infty}^\infty[g'(g^{-1}(t))+o(1)][S_Y(g^{-1}(t))+o(1)]\mathbb{V}(u)K(u)\mathrm{d}u+o(h)\\
	&& \; \; \; \; \; \; \; \; \; \; \; \; \; =hg'(g^{-1}(t))S_X(t)\int_{-\infty}^\infty V(u)W(u)\mathrm{d}u+o(h),
\end{eqnarray*}
then we have
\begin{eqnarray*}
	E[\mathbb{V}_{1,h}(g^{-1}(t),g^{-1}(X_1))V_{1,h}(g^{-1}(t),g^{-1}(X_1))]=\mathbb{S}_X(t)+o(h).
\end{eqnarray*}
Hence, the covariance is
\begin{eqnarray*}
	Cov[\widetilde{\mathbb{S}}_{X,1}(t),\widetilde{S}_{X,1}(t)]&=&\frac{1}{n}[E\{\mathbb{V}_{1,h}(g^{-1}(t),g^{-1}(X_1))V_{1,h}(g^{-1}(t),g^{-1}(X_1))\}\\
	&& \; \; \; \; \; \; \; -E\{\mathbb{V}_{1,h}(g^{-1}(t),g^{-1}(X_1))\}\{V_{1,h}(g^{-1}(t),g^{-1}(X_1))\}]\\
	&=&\frac{1}{n}[\mathbb{S}_X(t)-\mathbb{S}_X(t)S_X(t)+o(h)]\\
	&=&\frac{1}{n}\mathbb{S}_X(t)F_X(t)+o\left(\frac{h}{n}\right).
\end{eqnarray*}

\section{Proof of theorem~\ref{thm:set2}}
The usual reasoning of \textit{i.i.d.} random variables and the transformation property of expectation result in
\begin{eqnarray*}
	E[\widetilde{S}_{X,2}(t)]&=&\int_{-\infty}^\infty V\left(\frac{g^{-1}(t)-y}{h}\right)f_Y(y)\mathrm{d}y\\
	&=&\int_{-\infty}^\infty K\left(\frac{g^{-1}(t)-y}{h}\right)S_Y(y)\mathrm{d}y\\
	&=&\int_{-\infty}^\infty S_Y(g^{-1}(t)-hu)K(u)\mathrm{d}u\\
	&=&S_X(t)-\frac{h^2}{2}b_1(t)\int_{-\infty}^\infty u^2 K(u)\mathrm{d}u+o(h^2),
\end{eqnarray*}
and this gives us the $Bias[\widetilde{S}_{X,2}(t)]$. For the variance of $\widetilde{S}_{X,2}(t)$, first we calculate
\begin{eqnarray*}
	E[V_{2,h}^2(g^{-1}(t),g^{-1}(X_1))]&=&\frac{2}{h}\int_{-\infty}^\infty V\left(\frac{g^{-1}(t)-y}{h}\right)K\left(\frac{g^{-1}(t)-y}{h}\right)S_Y(y)\mathrm{d}y\\
	&=&2\int_{-\infty}^\infty S_Y(g^{-1}(t)-hu)V(u)K(u)\mathrm{d}u\\
	&=&S_X(t)-hg'(g^{-1}(t))f_X(t)\int_{-\infty}^\infty V(u)W(u)\mathrm{d}u+o(h).
\end{eqnarray*}
The resulting variance is
\begin{eqnarray*}
	Var[\widetilde{S}_{X,2}(t)]=\frac{1}{n}S_X(t)F_X(t)-\frac{h}{n}g'(g^{-1}(t))f_X(t)\int_{-\infty}^\infty V(y)W(y)\mathrm{d}y+o\left(\frac{h}{n}\right).
\end{eqnarray*}

Next for $Bias[\widetilde{\mathbb{S}}_{X,2}(t)]$, utilizing similar reasoning as before, we get
\begin{eqnarray*}
	E[\widetilde{\mathbb{S}}_{X,2}(t)]&=&\int_{-\infty}^\infty\mathbb{V}_{2,h}(g^{-1}(t),y)f_Y(y)\mathrm{d}y\\
	&=&\int_{-\infty}^\infty g'(y)V\left(\frac{g^{-1}(t)-y}{h}\right)S_Y(y)\mathrm{d}y\\
	&=&\frac{1}{h}\int_{-\infty}^\infty K\left(\frac{g^{-1}(t)-y}{h}\right)a(y)\mathrm{d}y\\
	&=&\int_{-\infty}^\infty a(g^{-1}(t)-hu)K(u)\mathrm{d}u\\
	&=&\mathbb{S}_X(t)+\frac{h^2}{2}b_3(t)\int_{-\infty}^\infty u^2 K(u)\mathrm{d}u+o(h^2),
\end{eqnarray*}
and this proves the bias part. For the variance, we need
\begin{eqnarray*}
	E[\mathbb{V}_{2,h}^2(g^{-1}(t),g^{-1}(X_1))]&=&2\int_{-\infty}^\infty g'(y)\mathbb{V}_{2,h}(g^{-1}(t),y)V\left(\frac{g^{-1}(t)-y}{h}\right)S_Y(y)\mathrm{d}y\\
	&=&2\int_{-\infty}^\infty\bigg[g'(y)V^2\left(\frac{g^{-1}(t)-y}{h}\right)\\
	&& \; \; \; \; \; \; \; +\frac{1}{h}\mathbb{V}_{2,h}(g^{-1}(t),y)K\left(\frac{g^{-1}(t)-y}{h}\right)\bigg]a(y)\mathrm{d}y.
\end{eqnarray*}
Once again using the integration by parts for the first term, we have
\begin{eqnarray*}
	2\int_{-\infty}^\infty g'(y)V^2\left(\frac{g^{-1}(t)-y}{h}\right)a(y)\mathrm{d}y&=&\frac{4}{h}\int_{-\infty}^\infty V\left(\frac{g^{-1}(t)-y}{h}\right)K\left(\frac{g^{-1}(t)-y}{h}\right)A(y)\mathrm{d}y\\
	&=&4\int_{-\infty}^\infty A(g^{-1}(t)-hu)V(u)K(u)\mathrm{d}u\\
	&=&2\bar{\mathbb{S}}_X(t)-2hg'(g^{-1}(t))\mathbb{S}_X(t)\int_{-\infty}^\infty V(u)W(u)\mathrm{d}u+o(h).
\end{eqnarray*}
The second term can be calculated in a similar way, which is
\begin{eqnarray*}
	&&\frac{2}{h}\int_{-\infty}^\infty\mathbb{V}_{2,h}(g^{-1}(t),y)K\left(\frac{g^{-1}(t)-y}{h}\right)a(y)\mathrm{d}y\\
	&& \; \; \; \; \; \; \; \; \; \; \; \; \; =2h\int_{-\infty}^\infty a(g^{-1}(t)-hu)\int_u^\infty g'(g^{-1}(t)-hv)V(v)\mathrm{d}vK(u)\mathrm{d}u\\
	&& \; \; \; \; \; \; \; \; \; \; \; \; \; =2h\int_{-\infty}^\infty [a(g^{-1}(t))+o(1)]\int_u^\infty[g'(g^{-1}(t))+o(1)]V(v)\mathrm{d}vK(u)\mathrm{d}u\\
	&& \; \; \; \; \; \; \; \; \; \; \; \; \; =2hg'(g^{-1}(t))\mathbb{S}_X(t)\int_{-\infty}^\infty V(u)W(u)\mathrm{d}u+o(h).
\end{eqnarray*}
Hence, we get
\begin{gather*}
E[\mathbb{V}_{2,h}^2(g^{-1}(t),g^{-1}(X_1))]=2\bar{\mathbb{S}}_X(t)+o(h),
\end{gather*}
proving the formula of $Var[\widetilde{\mathbb{S}}_{X,2}(t)]$.

Before moving onto the calculation of the covariance, we have to take a look at
\begin{eqnarray*}
	&&E[\mathbb{V}_{2,h}(g^{-1}(t),g^{-1}(X_1))V_{2,h}(g^{-1}(t),g^{-1}(X_1))]\\
	&& \; \; \; \; \; \; \; \; \; \; \; \; \; =\int_{-\infty}^\infty \left[g'(y)V^2\left(\frac{g^{-1}(t)-y}{h}\right)+\frac{1}{h}\mathbb{V}_{2,h}(g^{-1}(t),y)K\left(\frac{g^{-1}(t)-y}{h}\right)\right]S_Y(y)\mathrm{d}y.
\end{eqnarray*}
Once again we need to calculate them separately. The first term is
\begin{eqnarray*}
	&&\int_{-\infty}^\infty g'(y)V^2\left(\frac{g^{-1}(t)-y}{h}\right)S_Y(y)\mathrm{d}y\\
	&& \; \; \; \; \; \; \; \; \; \; \; \; \; =\frac{2}{h}\int_{-\infty}^\infty V\left(\frac{g^{-1}(t)-y}{h}\right)K\left(\frac{g^{-1}(t)-y}{h}\right)a(y)\mathrm{d}y\\
	&& \; \; \; \; \; \; \; \; \; \; \; \; \; =2\int_{-\infty}^\infty a(g^{-1}(t)-hu)V(y)K(y)\mathrm{d}u\\
	&& \; \; \; \; \; \; \; \; \; \; \; \; \; =\mathbb{S}_X(t)-hg'(g^{-1}(t))S_X(t)\int_{-\infty}^\infty V(u)W(u)\mathrm{d}u+o(h),
\end{eqnarray*}
while the second term is
\begin{eqnarray*}
	&&\frac{1}{h}\int_{-\infty}^\infty\mathbb{V}_{2,h}(g^{-1}(t),y)K\left(\frac{g^{-1}(t)-y}{h}\right)S_Y(y)\mathrm{d}y\\
	&& \; \; \; \; \; \; \; \; \; \; \; \; \; =h\int_{-\infty}^\infty S_Y(g^{-1}(t)-hu)\int_u^\infty g'(g^{-1}(t)-hv)V(v)\mathrm{d}vK(u)\mathrm{d}u\\
	&& \; \; \; \; \; \; \; \; \; \; \; \; \; =h\int_{-\infty}^\infty[S_Y(g^{-1}(t))+o(1)][g'(g^{-1}(t))\mathbb{V}(u)+o(1)]K(u)\mathrm{d}u\\
	&& \; \; \; \; \; \; \; \; \; \; \; \; \; =hg'(g^{-1}(t))S_X(t)\int_{-\infty}^\infty V(u)W(u)\mathrm{d}u+o(h),
\end{eqnarray*}
and the result is
\begin{eqnarray*}
	E[\mathbb{V}_{2,h}(g^{-1}(t),g^{-1}(X_1))V_{2,h}(g^{-1}(t),g^{-1}(X_1))]=\mathbb{S}_X(t)+o(h).
\end{eqnarray*}
Hence, the covariance is
\begin{eqnarray*}
	Cov[\widetilde{\mathbb{S}}_{X,2}(t),\widetilde{S}_{X,2}(t)]=\frac{1}{n}\mathbb{S}_X(t)F_X(t)+o\left(\frac{h}{n}\right).
\end{eqnarray*}

\section{Proof of theorem~\ref{thm:bmrlf}}
As for a fixed $t$ we have that $\widetilde{S}_{X,1}(t)$ and $\widetilde{\mathbb{S}}_{X,1}(t)$ are consistent estimators for $S_X(t)$ and $\mathbb{S}_X(t)$, respectively, then
\begin{eqnarray*}
	\widetilde{m}_{X,1}(t)-m_X(t)&=&\frac{\widetilde{\mathbb{S}}_{X,1}(t)-\widetilde{S}_{X,1}(t)m_X(t)}{S_X(t)}\left[1+\frac{S_X(t)-\widetilde{S}_{X,1}(t)}{\widetilde{S}_{X,1}(t)}\right]\\
	&=&\frac{\widetilde{\mathbb{S}}_{X,1}(t)-\widetilde{S}_{X,1}(t)m_X(t)}{S_X(t)}[1+o_p(1)].
\end{eqnarray*}
Thus, using Theorem~\ref{thm:set1}, we get
\begin{eqnarray*}
	Bias[\widetilde{m}_{X,1}(t)]&=&\frac{1}{S_X(t)}[E\{\widetilde{\mathbb{S}}_{X,1}(t)\}-m_X(t)E\{\widetilde{S}_{X,1}(t)\}]\\
	&=&\frac{h^2}{2S_X(t)}[b_2(t)+m_X(t)b_1(t)]\int_{-\infty}^\infty y^2 K(y)\mathrm{d}y+o(h^2).
\end{eqnarray*}
The same argument easily proves the formula of $Bias[\widetilde{m}_{X,2}(t)]$.

Using a similar method, for $i=1,2$, we have
\begin{eqnarray*}
	Var[\widetilde{m}_{X,i}(t)]&=&Var[\widetilde{m}_{X,i}(t)-m_X(t)]\\
	&=& Var\left[\frac{\widetilde{\mathbb{S}}_{X,1}(t)-\widetilde{S}_{X,1}(t)m_X(t)}{S_X(t)}\right]\\
	&=&\frac{1}{S_X^2(t)}[Var\{\widetilde{\mathbb{S}}_{X,i}(t)\}+m_X^2(t)Var\{\widetilde{S}_{X,i}(t)\}-2m_X(t)Cov\{\widetilde{\mathbb{S}}_{X,i},\widetilde{S}_{X,i}(t)\}]\\
	&=&\frac{1}{n}\frac{b_4(t)}{S_X^2(t)}-\frac{h}{n}\frac{b_5(t)}{S_X^2(t)}\int_{-\infty}^\infty V(y)W(y)\mathrm{d}y+o\left(\frac{h}{n}\right).
\end{eqnarray*}

\section{Proof of theorem~\ref{thm:normal}}
Because the proof of the case $i=1$ is similar, we will only explain the case of $i=2$ in detail. First, for some $\delta>0$, using H\"{o}lder and $c_r$ inequalities, we have
\begin{gather*}
E[|V_{2,h}(g^{-1}(t),g^{-1}(X_1))-E\{V_{2,h}(g^{-1}(t),g^{-1}(X_1))\}|^{2+\delta}]\leq 2^{2+\delta}E[|V_{2,h}(g^{-1}(t),g^{-1}(X_1))|^{2+\delta}].
\end{gather*}
But, since $0\leq V_{2,h}(x,y)\leq 1$ for any $x,y\in\mathbb{R}$, then
\begin{gather*}
E[|V_{2,h}(g^{-1}(t),g^{-1}(X_1))-E\{V_{2,h}(g^{-1}(t),g^{-1}(X_1))\}|^{2+\delta}]\leq 2^{2+\delta}<\infty,
\end{gather*}
and because $Var[V_{2,h}(g^{-1}(t),g^{-1}(X_1))]=O(1)$, we get
\begin{gather*}
\frac{E[|V_{2,h}(g^{-1}(t),g^{-1}(X_1))-E\{V_{2,h}(g^{-1}(t),g^{-1}(X_1))\}|^{2+\delta}]}{n^{\delta/2}[Var\{V_{2,h}(g^{-1}(t),g^{-1}(X_1))\}]^{1+\delta/2}}\rightarrow 0
\end{gather*}
when $n\rightarrow\infty$. Hence, by Loeve~\cite{loe63}, and with the fact $\widetilde{S}_{X,2}(t)\rightarrow_p S_X(t)$, we can conclude that
\begin{gather*}
\frac{\widetilde{S}_{X,2}(t)-S_X(t)}{\sqrt{Var[\widetilde{S}_{X,2}(t)]}}\rightarrow_D N(0,1).
\end{gather*}
Next, with a similar reasoning as before, we have
\begin{gather*}
E[|\mathbb{V}_{2,h}(g^{-1}(t),g^{-1}(X_1))-E\{\mathbb{V}_{2,h}(g^{-1}(t),g^{-1}(X_1))\}|^{2+\delta}]\leq 2^{2+\delta}E[|\mathbb{V}_{2,h}(g^{-1}(t),g^{-1}(X_1))|^{2+\delta}],
\end{gather*}
which, by the same inequalities, results in
\begin{eqnarray*}
	&&E[|\mathbb{V}_{2,h}(g^{-1}(t),g^{-1}(X_1))-E\{\mathbb{V}_{2,h}(g^{-1}(t),g^{-1}(X_1))\}|^{2+\delta}]\\
	&& \; \; \; \; \; \; \; \; \; \; \leq 2^{2+\delta}E\left[\left|\int_{-\infty}^{g^{-1}(X_1)}g'(z)V\left(\frac{g^{-1}(t)-z}{h}\right)\mathrm{d}z\right|^{2+\delta}\right]\\
	&& \; \; \; \; \; \; \; \; \; \; \leq 2^{2+\delta}E\left[\left|\int_{-\infty}^{g^{-1}(X_1)}g'(z)\mathrm{d}z\right|^{2+\delta}\right]\\
	&& \; \; \; \; \; \; \; \; \; \; \leq 2^{2+\delta}E(X_1^{2+\delta})\\
	&& \; \; \; \; \; \; \; \; \; \; <\infty.
\end{eqnarray*}
Therefore, with the same argument, we get
\begin{gather*}
\frac{\widetilde{\mathbb{S}}_{X,2}(t)-\mathbb{S}_X(t)}{\sqrt{Var[\widetilde{\mathbb{S}}_{X,2}(t)]}}\rightarrow_D N(0,1).
\end{gather*}
At last, by Slutsky's Theorem for rational function, the theorem is proven.
\begin{remark}\label{rem:delta}
	Since we only assume the existence of $E(X^3)$, then we should choose $\delta\leq 1$ in this proof.
\end{remark}

\section{Proof of theorem~\ref{thm:consistent}}
Nadaraya~\cite{nad64} guarantees that $\sup_{t\in\mathbb{R}}|\widehat{S}_Y(t)-S_Y(t)|\rightarrow_{a.s.}0$, which implies
\begin{gather*}
\sup_{t\in\Omega}|\widehat{S}_Y(g^{-1}(t))-S_Y(g^{-1}(t))|\rightarrow_{a.s.}0.
\end{gather*}
However, because $S_Y(g^{-1}(t))=S_X(t)$, and it is clear that $\widehat{S}_Y(g^{-1}(t))=\widetilde{S}_{X,1}(t)$,
then $\sup_{t\in\Omega}|\widetilde{S}_{X,1}(t)-S_X(t)|\rightarrow_{a.s.}0$ holds.

Next, since $\mathbb{S}_X(t)\geq 0$ is bounded above with
\begin{gather*}
\sup_{t\in\Omega}\mathbb{S}_X(t)=\lim_{t\rightarrow\omega'^+}\mathbb{S}_X(t)=E(X)-\omega',
\end{gather*}
then $a(g^{-1}(t))=\mathbb{S}_X(t)$ is bounded on $\Omega$. Furthermore,
\begin{eqnarray*}
	\widehat{a}(g^{-1}(t))&=&\frac{1}{n}\sum_{i=1}^n\int_{g^{-1}(t)}^\infty g'(z)V\left(\frac{z-g^{-1}(X_i)}{h}\right)\mathrm{d}z\\
	&=&\frac{1}{n}\sum_{i=1}^n\int_t^{\omega''} V\left(\frac{g^{-1}(z)-g^{-1}(X_i)}{h}\right)\mathrm{d}z\\
	&=&\widetilde{\mathbb{S}}_{X,1}(t)\\
	&>&0
\end{eqnarray*}
is also bounded above almost surely with
\begin{gather*}
\sup_{t\in\Omega}\widetilde{\mathbb{S}}_{X,1}(t)=\lim_{t\rightarrow\omega'^+}\widetilde{\mathbb{S}}_{X,1}(t)=\bar{X}-\omega'+O_p(h^2).
\end{gather*}
Thus, Lemma~\ref{lemma:a(t)} implies $\sup_{t\in\Omega}|\widehat{a}(g^{-1}(t))-a(g^{-1}(t))|\rightarrow_{a.s.} 0$, which is equivalent to $\sup_{t\in\Omega}|\widetilde{\mathbb{S}}_{X,1}(t)-\mathbb{S}_X(t)|\rightarrow_{a.s.}0$. As a conclusion, $\sup_{t\in\Omega}|\widetilde{m}_{X,1}(t)-m_X(t)|\rightarrow_{a.s.}0$ holds. The proof for the case of $i=2$ is similar.

\section{Proof of theorem~\ref{thm:preserve}}
Because, for $i=1,2$,
\begin{gather*}
\lim_{t\rightarrow\omega'^+}\widetilde{m}_{X,i}(t)=\frac{\lim_{t\rightarrow\omega'^+}\widetilde{\mathbb{S}}_{X,i}(t)}{\lim_{t\rightarrow\omega'^+}\widetilde{S}_{X,i}(t)},
\end{gather*}
we only need to see the limit behaviour of each estimators of the survival function and the cumulative survival function. First, we have
\begin{eqnarray*}
	\lim_{t\rightarrow\omega'^+}\widetilde{S}_{X,1}(t)&=&\frac{1}{nh}\sum_{i=1}^n\lim_{t\rightarrow\omega'^+}\int_{g^{-1}(t)}^\infty K\left(\frac{z-g^{-1}(X_i)}{h}\right)\mathrm{d}z\\
	&=&\frac{1}{n}\sum_{i=1}^n\int_{-\infty}^\infty K(u)\mathrm{d}u\\
	&=&1.
\end{eqnarray*}
For $\lim_{t\rightarrow\omega'^+}\widetilde{\mathbb{S}}_{X,1}(t)$, the use of the integration by subsitution and by parts means
\begin{eqnarray*}
	\lim_{t\rightarrow\omega'^+}\widetilde{\mathbb{S}}_{X,1}(t)&=&\frac{1}{n}\sum_{i=1}^n\lim_{t\rightarrow\omega'^+}\int_{g^{-1}(t)}^\infty g'(z)V\left(\frac{z-g^{-1}(X_i)}{h}\right)\mathrm{d}z\\
	&=&-\omega'+\frac{1}{n}\sum_{i=1}^n\int_{-\infty}^\infty g(g^{-1}(X_i)+hu)K(u)\mathrm{d}u\\
	&=&\frac{1}{n}\sum_{i=1}^n\int_{-\infty}^\infty[g(g^{-1}(X_i))+hg'(g^{-1}(X_i))u+O_p(h^2)]K(u)\mathrm{d}u-\omega'\\
	&=&\bar{X}-\omega'+O_p(h^2).
\end{eqnarray*}
On the other hand, the fact $\lim_{x\rightarrow-\infty}V(x)=1$ results in
\begin{gather*}
\lim_{t\rightarrow\omega'^+}\widetilde{S}_{X,2}(t)=\frac{1}{n}\sum_{i=1}^n\lim_{t\rightarrow\omega'^+}V\left(\frac{g^{-1}(t)-g^{-1}(X_i)}{h}\right)=1,
\end{gather*}
and
\begin{eqnarray*}
	\lim_{t\rightarrow\omega'^+}\widetilde{\mathbb{S}}_{X,2}(t)&=&\frac{1}{n}\sum_{i=1}^n\int_{-\infty}^{g^{-1}(X_i)}g'(z)\lim_{t\rightarrow\omega'^+}V\left(\frac{g^{-1}(t)-z}{h}\right)\mathrm{d}z\\
	&=&\frac{1}{n}\sum_{i=1}^n\int_{-\infty}^{g^{-1}(X_i)}g'(z)\mathrm{d}z\\
	&=&\bar{X}-\omega'.
\end{eqnarray*}
Then, the theorem is proven.
\end{document}